\documentclass[Journal]{IEEEtran}

\ifCLASSINFOpdf
\else
   \usepackage{graphicx}
\fi
\usepackage{url}
\usepackage{amsmath,amssymb,exscale}
\usepackage[compress]{cite}

\usepackage{color}
\usepackage{graphicx}
\usepackage{algorithm}
\usepackage{algorithmic}
\usepackage[center]{caption}
\usepackage{verbatim}
\usepackage[bookmarks=false,hidelinks]{hyperref}
\usepackage{balance}

\usepackage{booktabs}
\usepackage{array}
\usepackage{multirow}
\usepackage{resizegather}
\usepackage{bbm}
\usepackage{subcaption}
\usepackage{lineno}

\usepackage{amsmath,amssymb,bm}
\usepackage{graphicx}
\usepackage{cite}
\usepackage{xcolor}

\newcommand{\revised}[1]{\textcolor{black}{#1}}

\usepackage{amsthm}
\newtheoremstyle{italic-head-normal-body}
  {\topsep}{\topsep}
  {\normalfont}
  {}
  {\itshape}
  {}
  {0.5em}
  {\thmname{#1}~\thmnumber{#2}: \thmnote{(#3)}}

\theoremstyle{italic-head-normal-body}
\newtheorem{theorem}{Theorem}
\newtheorem{proposition}[theorem]{Proposition}
\newtheorem{corollary}{Corollary}

\begin{document}

\captionsetup[figure]{name={Fig.},labelsep=period,singlelinecheck=off}
\graphicspath{{Images/}}
\renewcommand{\algorithmicrequire}{\bf{Input:}}
\renewcommand{\algorithmicensure}{\bf{Output:}}

\title{SemISAC: Semantic Integrated Sensing and Communications}

\author{Xiaoqi Zhang,~\IEEEmembership{Student Member,~IEEE,}
        J. Andrew Zhang,~\IEEEmembership{Senior Member,~IEEE},\\
        Zhongqin Wang, ~\IEEEmembership{Member,~IEEE},
        Chang Liu,~\IEEEmembership{Member,~IEEE,}\\
        Weijie Yuan,~\IEEEmembership{Senior Member,~IEEE,}
        Giuseppe Caire,~\IEEEmembership{Fellow,~IEEE,}
        and Geoffrey Ye Li,~\IEEEmembership{Fellow,~IEEE}

\thanks{X. Zhang, J. A. Zhang and Z. Wang are with the School of Electrical and Data Engineering, University of Technology Sydney, Sydney, NSW 2007, Australia (e-mail: xiaoqi.zhang@student.uts.edu.au; andrew.zhang@uts.edu.au; zhongqin.wang@uts.edu.au).}
\thanks{C. Liu is with the Department of Computer Science and Information Technology, La Trobe University, Melbourne, Victoria 3086, Australia (e-mail: C.Liu6@latrobe.edu.au).}
\thanks{W. Yuan is with the Department of Electrical and Computer Systems Engineering, Monash University, Clayton, VIC 3800, Australia (e-mail: weijie.yuan@monash.edu.au).}
\thanks{G. Caire is with the Chair of Communications and Information Theory, Technical University of Berlin, 10623 Berlin, Germany (e-mail: caire@tuberlin.de).}
\thanks{G. Y. Li is with the Department of Electrical and Electronic Engineering, Imperial College London, London SW7 2AZ, U.K. (e-mail: geoffrey.li@imperial.ac.uk).}

\thanks{(Corresponding author: J. Andrew Zhang.)}
}

\maketitle

\begin{abstract}

Conventional integrated sensing and communications (ISAC) systems primarily integrate communications and sensing through shared physical resources, without explicitly exploiting task-relevant semantic information. To move beyond such physical-level integration, we propose semantic ISAC (SemISAC), a general framework that unifies semantic communication (SemCom) and semantic sensing (SemS) to convey source meaning and acquire environmental meaning. Specifically, the transmitter combines source semantics and sensing task information with available side information to design the shared waveform and allocate radio resources, while the receiver-side communication and sensing task decoders recover the source meaning and infer the required environmental information, respectively. We also provide an information-theoretic interpretation to characterize the relationship between physical and task-relevant information and the resulting semantic trade-off in SemISAC. Building on this framework, we formulate the general SemISAC design problem and propose two realization methods, namely end-to-end (E2E) SemISAC optimization and modular SemISAC optimization. As a concrete realization, we apply modular SemISAC optimization to jointly design a learnable time--frequency (TF) precoder in an orthogonal frequency-division multiplexing (OFDM) system for representative SemCom and SemS tasks. Simulation results demonstrate that the proposed realization reduces sensing semantic distortion under a given communication requirement and achieves a more favorable communication--sensing trade-off than baseline designs.

\end{abstract}

\begin{IEEEkeywords}
Semantic integrated sensing and communication (SemISAC), semantic communication, semantic sensing, semantic-to-physical design, task-oriented waveform design.
\end{IEEEkeywords}
\IEEEpeerreviewmaketitle

\section{Introduction}

Future wireless networks are expected to support intelligent services that depend on reliable connectivity, environmental awareness, distributed computation, and autonomous decision-making \cite{liu2022isac}. Applications such as autonomous transportation, robotics, unmanned platforms, digital twins, and immersive interaction therefore require a common radio platform capable of both delivering source information and acquiring environmental information \cite{wu2025humanEnvironmentalSensing,xiaoqi2024beamforming}. Integrated sensing and communication (ISAC) addresses this requirement by sharing spectrum, hardware, signal-processing modules, and transmitted waveforms between the two functions \cite{zhang2022jcas,xiaoqi2023enhanced}. However, physical-layer integration is insufficient for task-oriented services because conventional bit-level and estimation-level metrics do not directly characterize how transmitted source information and acquired environmental information support downstream inference, sensing, and decision-making \cite{wen2024taskiscc}.

Conventional communication systems pursue reliable source reconstruction under a given fidelity criterion. When only task-relevant content is needed from high-dimensional images, videos, or multimodal data, resources used to reconstruct other source components do not support the receiver task. Semantic communication (SemCom) addresses this mismatch by preserving source meaning or decision-relevant features rather than every transmitted bit. For example, DeepSC \cite{xie2021deepsc} jointly learns a semantic representation and channel adaptation for end-to-end semantic transmission, improving text-transmission robustness over noisy channels. An alternative information-bottleneck approach \cite{shao2022ib} jointly optimizes feature extraction, source coding, and channel coding to retain features that determine the downstream decision and deliver them under bandwidth and latency constraints.

Semantic sensing (SemS), by contrast, acquires task-relevant information from the physical environment. Conventional radar and wireless sensing designs estimate target existence, delay, Doppler shift, angle, range, or velocity and optimize detection probability, estimation error, or the Cram\'er--Rao bound (CRB) \cite{wu2025humanEnvironmentalSensing,xiaoqi2023learnCRB}. However, many downstream tasks do not require uniform fidelity across all environmental components. For example, the task-oriented integrated sensing, communication, and computation design in \cite{wen2024taskiscc} uses a discriminant-gain criterion to connect physical sensing conditions to inference performance and coordinate resources for multi-device edge inference. The SemS framework in \cite{xiaoqi2026semanticSensing,zhang2026semantic} instead introduces task-dependent relevance to prioritize the objects, propagation paths, and physical parameter dimensions that determine the required environmental task output. Both methods concentrate sensing resources on environmental information that contributes directly to the downstream task.

Existing studies that incorporate task objectives and semantic processing into sensing and ISAC can be organized into three research directions. \emph{One direction} introduces task objectives into conventional sensing, communication, and computation. For federated edge learning, \cite{liu2023ambient} uses a learning-convergence bound to coordinate sensing and communication resources with adaptive batch sizes, thereby accelerating model training under latency, energy, and sensing-quality constraints. In a task-oriented design for multi-device edge inference, \cite{zhuang2024iscc} combines broadband over-the-air computation with a max--min pairwise discriminant-gain criterion to jointly optimize sensing power, transmit precoding, and receive beamforming under limited energy.

Beyond this task-oriented coordination of conventional modules, \emph{another direction} combines sensing with semantic communication by processing sensing-derived information. Specifically, \cite{du2023semantic_sensing_comm} selects and compresses task-relevant sensing observations after data acquisition to reduce their storage and transmission overhead. In a multimodal setting, \cite{peng2025simac} employs semantic fusion to integrate radar and visual observations before semantic encoding and task-specific decoding. For extended-reality services, \cite{zhang2022xr} jointly considers sensing, rendering, and communication by exploiting the spatiotemporal distribution of environmental information to reduce the resulting traffic load.

\emph{A further direction} extends semantic processing across both sensing and communication. More recently, \cite{ren2025semantic_empowered_isac} introduces semantic-level resource slicing and bidirectional mapping to translate service requirements into communication and sensing resource allocation. Communication conveys source information available at the transmitter before transmission, whereas sensing acquires previously unknown environmental information through probing. However, these studies neither distinguish these information-generation mechanisms nor provide a general framework for jointly modeling and optimizing SemCom and SemS at the semantic level.


To address this gap, we build on prior SemCom studies \cite{xie2021deepsc,shao2022ib} and our recently developed SemS framework \cite{xiaoqi2026semanticSensing,zhang2026semantic} to develop semantic ISAC (SemISAC), a unified framework that advances ISAC from physical-resource sharing to semantic-level integration. SemISAC combines source semantics and sensing task information with available side information for shared waveform design and radio resource allocation, while the corresponding task decoders recover the source meaning and infer the required environmental information. Here, side information includes knowledge bases for semantic processing and available channel and resource information for physical design. The main contributions are summarized as follows.

\begin{itemize}

   \item We establish a general SemISAC framework that advances ISAC from physical-resource sharing to semantic-level integration through a unified information flow for source-semantic transmission and environmental-semantic acquisition. In particular, a joint semantic-to-physical (S2P) module combines source semantics and sensing task information with available side information to coordinate the shared waveform and radio resource allocation. We further provide an information-theoretic interpretation that distinguishes task-relevant information from physical information and characterizes how the shared physical design couples semantic communication and sensing, thereby revealing their semantic communication--sensing trade-off.

  \item Building on the proposed SemISAC framework, we formulate a general task-oriented optimization problem that jointly considers the semantic objectives of communication and sensing under shared physical-resource constraints. To support different implementation settings, we develop two realization methods, namely end-to-end (E2E) SemISAC optimization and modular SemISAC optimization. The former jointly optimizes the trainable functions throughout the processing chain using the final task losses. The latter optimizes selected modules using task losses evaluated through the processing chain or approximated by interfaces that map physical quality to task performance.

    \item As a concrete realization of modular SemISAC optimization, we develop a learning-based orthogonal frequency-division multiplexing (OFDM) design for representative SemCom and SemS tasks, where task-aware semantic proxies guide the joint design of a learnable time--frequency (TF) precoder. Specifically, the communication proxy weights errors in the recovered semantic representation by the local sensitivity of the classifier output. The sensing proxy weights delay--Doppler uncertainty by the local sensitivity of a smooth arrival-risk score. These proxies guide the precoder to adapt the shared physical resources to both semantic tasks. Simulation results demonstrate that the semantic proxies better reflect downstream semantic performance than the physical proxies. The proposed task-aware design achieves lower sensing semantic distortion under a given communication requirement and a more favorable communication--sensing trade-off than the baseline designs.

\end{itemize}

The remainder of this paper is organized as follows. Section II introduces SemCom and SemS and describes their corresponding information flows. Section III develops the unified SemISAC architecture and provides an information-theoretic interpretation of its communication--sensing trade-off. Section IV formulates the general task-oriented optimization problem and presents the E2E and modular realization methods. Section V presents the learning-based OFDM TF precoding realization. Section VI presents the simulation results, and Section VII concludes the paper.

\section{System Model}
\label{sec:system_model}

Conventional communication and sensing target source reconstruction and physical sensing results, respectively, while conventional ISAC coordinates a shared transmit waveform for these two physical objectives. SemCom and SemS instead focus on task-relevant information determined by their task specifications and knowledge bases while retaining distinct information-generation mechanisms. Accordingly, this section models the two information-generation processes separately.

\subsection{SemCom Model for Source Information Transmission}
\label{subsec:semantic_communication_model}

In SemCom, the current source realization is available at the transmitter, allowing the content required by a receiver-side task to be extracted and encoded before channel propagation. Let $\mathcal{T}_{\rm c}$ and $\mathcal{K}_{\rm c}$ denote the communication task specification and knowledge base, respectively. In a single-task implementation, $\mathcal{T}_{\rm c}$ need not be provided explicitly because the task can be implicitly encoded in the model architecture and loss function. In general, the source $\mathbf{S}_{\rm c}$ can comprise text, images, video, speech, or other forms of data. The associated communication task target is defined as \cite{shao2022ib}
\begin{align}
    \mathbf{U}_{\rm c}
    =
    \phi_{\rm c}
    \left(
        \mathbf{S}_{\rm c};
        \mathcal{T}_{\rm c},
        \mathcal{K}_{\rm c}
    \right),
    \label{eq:communication_task_target}
\end{align}
where $\phi_{\rm c}(\cdot)$ specifies the source information required by $\mathcal{T}_{\rm c}$ under $\mathcal{K}_{\rm c}$. The communication semantic encoder maps $\mathbf{S}_{\rm c}$ to the transmitter-side representation as
\begin{align}
    \mathbf{z}_{\rm c}
    =
    E_{\rm c}
    \left(
        \mathbf{S}_{\rm c},
        \mathcal{T}_{\rm c};
        \mathcal{K}_{\rm c},
        \boldsymbol{\Theta}_{\rm c}^{\rm e}
    \right),
    \label{eq:communication_semantic_representation}
\end{align}
where $E_{\rm c}(\cdot)$ denotes the communication semantic encoder and $\boldsymbol{\Theta}_{\rm c}^{\rm e}$ contains its trainable parameters. \revised{Transmit signal processing maps this representation to the communication signal $\mathbf{x}_{\rm c}$ through the physical operations required by the chosen transmission scheme. The representation $\mathbf{z}_{\rm c}$ serves as the functional interface between communication semantic encoding and transmit signal processing, whether these functions are implemented separately or jointly in an E2E design.} After channel propagation, the received communication signal is modeled as \cite{liu2022isac}
\begin{align}
    \mathbf{y}_{\rm c}
    =
    \mathbf{H}_{\rm c}\mathbf{x}_{\rm c}
    +
    \mathbf{n}_{\rm c},
    \label{eq:physical_communication_channel}
\end{align}
where $\mathbf{H}_{\rm c}$ denotes the physical communication channel, $\mathbf{n}_{\rm c}$ is the noise, and $\mathbf{y}_{\rm c}$ is the received communication signal.

\revised{At the receiver, communication receiver processing obtains the received representation from $\mathbf{y}_{\rm c}$, and the communication semantic decoder interprets it for the task. Their combined operation is represented as}
\begin{align}
    \widehat{\mathbf{U}}_{\rm c}
    =
    D_{\rm c}
    \left(
        \mathbf{y}_{\rm c};
        \mathcal{T}_{\rm c},
        \mathcal{K}_{\rm c},
        \boldsymbol{\Theta}_{\rm c}^{\rm d}
    \right),
    \label{eq:communication_task_output}
\end{align}
\revised{where $D_{\rm c}(\cdot)$ represents communication receiver processing and semantic decoding, and $\boldsymbol{\Theta}_{\rm c}^{\rm d}$ contains the receiver parameters.}

\subsection{SemS Model for Environmental Information Acquisition}
\label{subsec:semantic_sensing_model}

SemS acquires task-relevant environmental information by transmitting a waveform to probe the physical environment and observing the response through the sensing channel. Before sensing, the task specification and knowledge base are available, whereas the current environmental state \revised{$\mathbf{E}$} and the sensing-channel realization it induces remain unknown \cite{xiaoqi2026semanticSensing}. Therefore, the task specification and knowledge base guide waveform design to enhance sensitivity to the environmental features most informative for the sensing task. Let $\mathcal{T}_{\rm s}$ and $\mathcal{K}_{\rm s}$ denote the sensing task specification and knowledge base, respectively. The task-relevant information is formulated as
\begin{align}
    \mathbf{U}_{\rm s}
    =
    \phi_{\rm s}
    \left(
        \mathbf{E};
        \mathcal{T}_{\rm s},
        \mathcal{K}_{\rm s}
    \right),
    \label{eq:sensing_task_target}
\end{align}
where $\phi_{\rm s}(\cdot)$ maps the environmental state to the information required by $\mathcal{T}_{\rm s}$ under $\mathcal{K}_{\rm s}$. The same environment determines the propagation and reflection parameters $\mathcal{P}_{\rm phy}=\psi_{\rm phy}(\mathbf{E})$. Thus, $\mathbf{U}_{\rm s}$ specifies the task-relevant information to be inferred, whereas $\mathcal{P}_{\rm phy}$ determines how the environment affects the probing waveform. \revised{The task-relevant information can depend on physical-parameter values and their relationships, such as arrival time derived from distance and speed. The corresponding sensing semantic requirement is constructed as}
\begin{align}
    \mathbf{q}_{\rm s}
    =
    E_{\rm s}
    \left(
        \mathcal{T}_{\rm s};
        \mathcal{K}_{\rm s},
        \boldsymbol{\Theta}_{\rm s}^{\rm e}
    \right),
    \label{eq:sensing_task_representation}
\end{align}
\revised{where $E_{\rm s}(\cdot)$ constructs the requirement from the task specification and knowledge base, including transmitter-available prior information or context.} The parameters $\boldsymbol{\Theta}_{\rm s}^{\rm e}$ specify an analytical rule, a predefined configuration, or a learned mapping. \revised{The requirement $\mathbf{q}_{\rm s}$ describes the environmental information to be acquired, including relevant parameter ranges, relationships, and relative importance. It guides waveform design before the current environmental information is obtained through sensing. Similar to $\mathbf{z}_{\rm c}$, $\mathbf{q}_{\rm s}$ serves as the functional interface between sensing requirement construction and waveform design, whether these functions are implemented separately or jointly in an E2E design.}

The sensing waveform $\mathbf{x}_{\rm s}$ interacts with the environment through the physical sensing channel. The observation is modeled as \cite{zhang2022jcas}
\begin{align}
    \mathbf{y}_{\rm s}
    =
    \mathbf{H}_{\rm s}
    \left(
        \mathcal{P}_{\rm phy}
    \right)
    \mathbf{x}_{\rm s}
    +
    \mathbf{n}_{\rm s},
    \label{eq:physical_sensing_channel}
\end{align}
where $\mathbf{H}_{\rm s}(\mathcal{P}_{\rm phy})$ denotes the physical sensing channel governed by $\mathcal{P}_{\rm phy}$, $\mathbf{n}_{\rm s}$ denotes the channel noise, and $\mathbf{y}_{\rm s}$ denotes the physical sensing observation. \revised{At the receiver, sensing signal processing extracts sensing features from $\mathbf{y}_{\rm s}$ using the known waveform $\mathbf{x}_{\rm s}$, and the sensing semantic decoder interprets these features to infer the task-relevant environmental information.} Their combined operation is represented as
\begin{align}
    \widehat{\mathbf{U}}_{\rm s}
    =
    D_{\rm s}
    \left(
        \mathbf{y}_{\rm s},
        \mathbf{x}_{\rm s};
        \mathcal{T}_{\rm s},
        \mathcal{K}_{\rm s},
        \boldsymbol{\Theta}_{\rm s}^{\rm d}
    \right),
    \label{eq:sensing_task_output}
\end{align}
\revised{where $D_{\rm s}(\cdot)$ represents sensing signal processing and semantic decoding, and $\boldsymbol{\Theta}_{\rm s}^{\rm d}$ denotes the receiver parameters.}

SemCom and SemS retain distinct information-generation mechanisms but share a functional structure comprising task-dependent representations, physical waveforms, observations, and receiver-side task outputs. This common structure provides the basis for the unified SemISAC architecture developed in Section~\ref{sec:general_architecture}.

\section{General SemISAC Architecture}
\label{sec:general_architecture}

\begin{figure*}[tbh]
    \centering
    \includegraphics[width=0.85\linewidth]{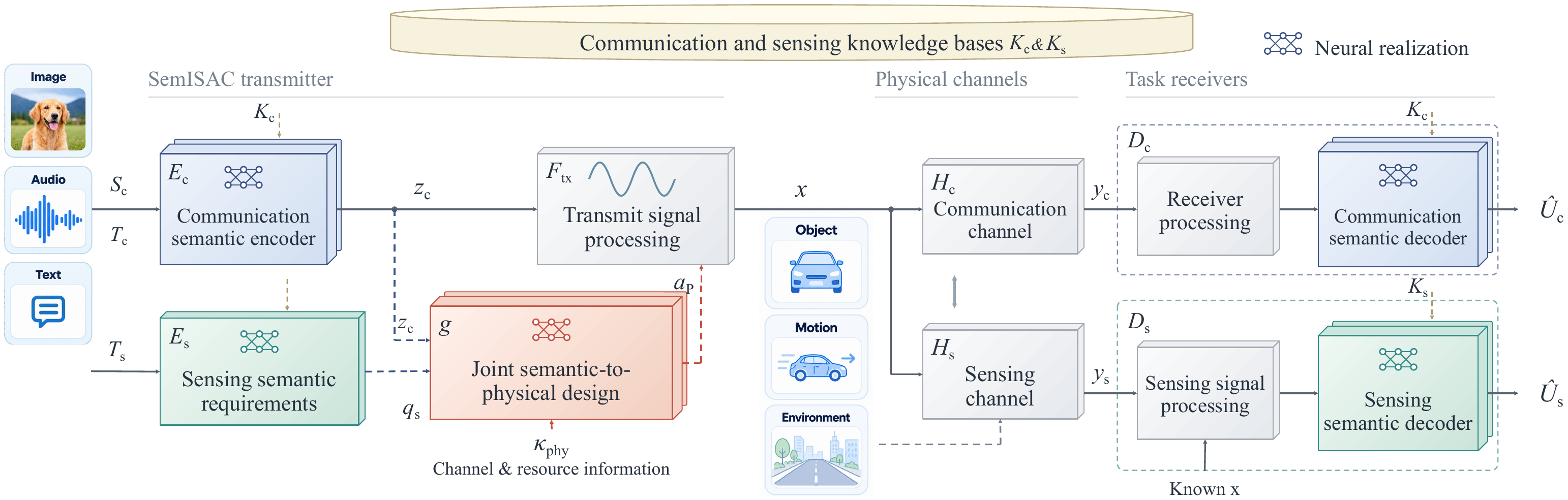}
    \caption{The proposed SemISAC architecture.}
    \label{fig:SS_frame}
\end{figure*}

Building on the branch-specific models in Section~\ref{sec:system_model}, we develop the SemISAC architecture from the signal and channel requirements of SemCom and SemS. We first introduce its functional components and information flow, and then provide an information-theoretic interpretation of the shared design.

\subsection{SemISAC Information Flow and Functional Architecture}
\label{subsec:core_components}

The SemISAC architecture in Fig.~\ref{fig:SS_frame} retains the shared physical transmission of ISAC and uses task-relevant information to guide its design. The communication task specifies what source information must reach the receiver, while the sensing task specifies what environmental information must be acquired. These requirements determine how the shared waveform and radio resources should be configured to support the two tasks. The neural-network symbols indicate functions that can be implemented using neural networks.

\revised{The communication and sensing knowledge bases determine task relevance and support semantic interpretation at the receivers. In particular, the communication knowledge base can be shared explicitly or embedded consistently in pretrained transmitter and receiver models \cite{yi2024sharedkb}.}

\begin{itemize}
    \item \emph{Communication semantic encoder.} \revised{Following SemCom, we use a task-dependent source representation to focus transmission resources on the information required by the receiver task. The encoder $E_{\rm c}$ uses $\mathcal{T}_{\rm c}$ and $\mathcal{K}_{\rm c}$ to extract task-relevant features from $\mathbf{S}_{\rm c}$, yielding $\mathbf{z}_{\rm c}$. This representation follows the signal path to transmit processing and also informs the physical design.}

    \item \emph{Sensing requirement construction.} To guide waveform design before probing, this module expresses the sensing task through the environmental features to be acquired and their relative importance. The requirement function $E_{\rm s}$ uses $\mathcal{T}_{\rm s}$ and $\mathcal{K}_{\rm s}$, including available sensing priors, to construct $\mathbf{q}_{\rm s}$.

    \item \emph{Joint S2P design.} \revised{We introduce joint S2P design to translate source semantics and sensing requirements into a shared transmit configuration, allowing semantic relevance to guide waveform design and resource allocation. This module uses the cross-layer inputs $(\mathbf{z}_{\rm c},\mathbf{q}_{\rm s})$ and the available channel, resource, and system information $\boldsymbol{\kappa}_{\rm phy}$ to select the transmit configuration $\mathbf{a}_{\rm P}$.}

    \item \emph{Transmit signal processing.} \revised{This module applies the selected configuration to the communication representation to generate the shared signal. It performs coding and symbol mapping, precoding and resource mapping, and waveform generation as required by the realization. In learned SemCom, the channel encoder maps semantic features to baseband channel-input symbols \cite{xie2021deepsc}. This function is contained in transmit signal processing and can be combined with precoding. A digital realization converts features to bits before forward error correction and symbol modulation, while a continuous-valued realization directly maps features to channel-input symbols. The communication path also supports joint source-channel coding (JSCC), in which source representation and channel-symbol generation are designed jointly.}

    \item \emph{Communication receive path.} \revised{Communication receiver processing provides a received representation for the communication semantic decoder to recover $\widehat{\mathbf{U}}_{\rm c}$.}

    \item \emph{Sensing receive path.} \revised{Sensing signal processing extracts sensing features from the echo using the known transmitted signal, and the sensing semantic decoder interprets these features to infer $\widehat{\mathbf{U}}_{\rm s}$.}
\end{itemize}

\revised{Each receive path can implement its two functions separately or jointly through the corresponding $D_i$ in Section~\ref{sec:system_model}.}

\revised{The joint S2P design and transmit signal processing are expressed as}
\begin{align}
    \begin{aligned}
    \mathbf{a}_{\rm P}
    ={}&
    g_{\boldsymbol{\omega}_{g}}
    \left(
        \mathbf{z}_{\rm c},
        \mathbf{q}_{\rm s};
        \boldsymbol{\kappa}_{\rm phy}
    \right), \\
    \mathbf{x}
    ={}&
    \revised{F_{\rm tx}
    \left(
        \mathbf{z}_{\rm c};
        \mathbf{a}_{\rm P},
        \boldsymbol{\omega}_{\rm tx}
    \right)},
    \end{aligned}
    \label{eq:joint_semisac_coordination}
\end{align}
\revised{where $\boldsymbol{\omega}_{g}$ parameterizes the design function and $\boldsymbol{\omega}_{\rm tx}$ denotes the optional trainable parameters of transmit signal processing. The configuration $\mathbf{a}_{\rm P}$ specifies the selected precoding, waveform settings, or resource allocation for the current transmission.}

\revised{The shared signal $\mathbf{x}$ replaces the separate waveforms in Eqs.~(\ref{eq:physical_communication_channel}) and~(\ref{eq:physical_sensing_channel}), producing the communication observation $\mathbf{y}_{\rm c}$ and sensing observation $\mathbf{y}_{\rm s}$. The two channels can coincide in bistatic sensing with a shared remote receiver and differ in monostatic sensing, where echoes return to \mbox{the transmitter~\cite{andrewradar2022}}.}

\subsection{Information-Theoretic Interpretation of SemISAC}
\label{subsec:information_theoretic_interpretation}

Information-theoretic ISAC measures the physical information supported by a shared waveform using communication mutual information and sensing-state estimation limits \cite{xiong2023fundamental}. This subsection distinguishes that physical information from the information required by the communication and sensing tasks. The resulting decomposition yields semantic performance bounds and identifies the task-sufficient case in which the physical and semantic trade-offs coincide.

Let $\boldsymbol{\Omega}=\left(\mathcal{T}_{\rm c},\mathcal{T}_{\rm s},\mathcal{K}_{\rm c},\mathcal{K}_{\rm s}\right)$ denote a given semantic context, and let $\Pi$ denote the set of feasible S2P policies under the waveform and resource constraints. \revised{For each $\pi\in\Pi$, the selected transmit configuration and transmit signal processing determine $X$.} All mutual-information quantities use the joint distribution induced by $\pi$ for this context.

The variables $X$, $Y_{\rm c}$, $Y_{\rm s}$, $U_{\rm c}$, $U_{\rm s}$, $E$, and $P_{\rm phy}$ correspond to $\mathbf{x}$, $\mathbf{y}_{\rm c}$, $\mathbf{y}_{\rm s}$, $\mathbf{U}_{\rm c}$, $\mathbf{U}_{\rm s}$, $\mathbf{E}$, and $\mathcal{P}_{\rm phy}$. The semantic information in the two branches is
\begin{align*}
    \mathcal{I}_{\rm c}^{\rm sem}(\pi)
    &\triangleq I_{\pi}(U_{\rm c};Y_{\rm c}),
    &\qquad \mathcal{I}_{\rm s}^{\rm sem}(\pi)
    &\triangleq I_{\pi}(U_{\rm s};Y_{\rm s}\mid X).
\end{align*}
Because the probing waveform $X$ is known at the sensing receiver, $\mathcal{I}_{\rm s}^{\rm sem}(\pi)$ is conditioned on $X$. The physical information in the two branches is
\begin{align*}
    \mathcal{I}_{\rm c}^{\rm phy}(\pi)
    &\triangleq I_{\pi}(X;Y_{\rm c}),
    &\qquad \mathcal{I}_{\rm s}^{\rm phy}(\pi)
    &\triangleq I_{\pi}(P_{\rm phy};Y_{\rm s}\mid X).
\end{align*}

For branch $i\in\{{\rm c},{\rm s}\}$, $\mathcal{L}_{i}^{\rm sem}(\pi)\triangleq\mathbb{E}_{\pi}[d_i(U_i,\widehat{U}_i)]$ is the achieved task distortion. Under the $\pi$-induced distribution and distortion measures $d_i$, the rate--distortion quantities are $R_{U_{\rm c}}(\mathcal{L}_{\rm c}^{\rm sem}(\pi))$ and $R_{U_{\rm s}\mid X}(\mathcal{L}_{\rm s}^{\rm sem}(\pi))$ \cite{cover2006elementsInformation,stavrou2022goal}. The sensing model in \eqref{eq:physical_sensing_channel} satisfies
\begin{align}
    p_{\pi}
    \left(
        y_{\rm s}
        \mid
        e,p_{\rm phy},x
    \right)
    &=
    p_{\pi}
    \left(
        y_{\rm s}
        \mid
        p_{\rm phy},x
    \right).
    \label{eq:sensing_conditional_observation_assumption}
\end{align}
Since $U_{\rm s}=\phi_{\rm s}(E;\mathcal{T}_{\rm s},\mathcal{K}_{\rm s})$ and $P_{\rm phy}=\psi_{\rm phy}(E)$, \eqref{eq:sensing_conditional_observation_assumption} implies $I_{\pi}(U_{\rm s};Y_{\rm s}\mid P_{\rm phy},X)=0$. Equivalently, $U_{\rm s}\longrightarrow P_{\rm phy}\longrightarrow Y_{\rm s}$ is a Markov chain conditioned on $X$.

\begin{proposition}[Task-Relevant Information Decomposition]
\label{prop:task_relevant_information_decomposition}

Under the physical observation models in Section~\ref{sec:system_model}, suppose that the distribution induced by $\pi\in\Pi$ satisfies $U_{\rm c}\longrightarrow X\longrightarrow Y_{\rm c}$. This communication Markov chain and the sensing relation $I_{\pi}(U_{\rm s};Y_{\rm s}\mid P_{\rm phy},X)=0$ give the following decompositions.
\begin{align}
    \mathcal{I}_{\rm c}^{\rm phy}(\pi)
    &=
    \mathcal{I}_{\rm c}^{\rm sem}(\pi)
    +
    I_{\pi}
    \left(
        X;
        Y_{\rm c}
        \mid
        U_{\rm c}
    \right),
    \label{eq:communication_information_decomposition}\\
    \mathcal{I}_{\rm s}^{\rm phy}(\pi)
    &=
    \mathcal{I}_{\rm s}^{\rm sem}(\pi)
    +
    I_{\pi}
    \left(
        P_{\rm phy};
        Y_{\rm s}
        \mid
        U_{\rm s},X
    \right).
    \label{eq:sensing_information_decomposition}
\end{align}
The two conditional mutual-information terms are $\Delta_{\rm c}(\pi)$ and $\Delta_{\rm s}(\pi)$. Their non-negativity gives $\mathcal{I}_{i}^{\rm sem}(\pi)\leq\mathcal{I}_{i}^{\rm phy}(\pi)$ for $i\in\{{\rm c},{\rm s}\}$.
\end{proposition}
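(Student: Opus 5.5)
The plan is to expand one joint mutual information in two ways with the chain rule, once for each branch. The Markov relations stated in Proposition~\ref{prop:task_relevant_information_decomposition} will then make one of the two cross terms vanish.

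\emph{Communication branch.} We expand $I_{\pi}(X,U_{\rm c};Y_{\rm c})$ by the chain rule in both orders:
\begin{align*}
I_{\pi}(X,U_{\rm c};Y_{\rm c})
&= I_{\pi}(X;Y_{\rm c}) + I_{\pi}(U_{\rm c};Y_{\rm c}\mid X)\\
&= I_{\pi}(U_{\rm c};Y_{\rm c}) + I_{\pi}(X;Y_{\rm c}\mid U_{\rm c}).
\end{align*}
The assumed chain $U_{\rm c}\longrightarrow X\longrightarrow Y_{\rm c}$ gives $I_{\pi}(U_{\rm c};Y_{\rm c}\mid X)=0$. Equating the two expansions then yields \eqref{eq:communication_information_decomposition}, with $\Delta_{\rm c}(\pi)=I_{\pi}(X;Y_{\rm c}\mid U_{\rm c})$. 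The chain holds because $\mathbf{y}_{\rm c}$ in \eqref{eq:physical_communication_channel} depends on the source only through $\mathbf{x}$, since $\mathbf{H}_{\rm c}$ and $\mathbf{n}_{\rm c}$ are independent of $U_{\rm c}$ given $X$.

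\emph{Sensing branch.} The same argument applies with everything conditioned on $X$. We expand
\begin{align*}
I_{\pi}(P_{\rm phy},U_{\rm s};Y_{\rm s}\mid X)
&= I_{\pi}(P_{\rm phy};Y_{\rm s}\mid X) + I_{\pi}(U_{\rm s};Y_{\rm s}\mid P_{\rm phy},X)\\
&= I_{\pi}(U_{\rm s};Y_{\rm s}\mid X) + I_{\pi}(P_{\rm phy};Y_{\rm s}\mid U_{\rm s},X).
\end{align*}
The relation $I_{\pi}(U_{\rm s};Y_{\rm s}\mid P_{\rm phy},X)=0$ was derived before the proposition from \eqref{eq:sensing_conditional_observation_assumption}, using that $U_{\rm s}$ and $P_{\rm phy}$ are both functions of $E$. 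With that term removed, equating the two expansions gives \eqref{eq:sensing_information_decomposition}, with $\Delta_{\rm s}(\pi)=I_{\pi}(P_{\rm phy};Y_{\rm s}\mid U_{\rm s},X)$.

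\emph{Inequality.} Conditional mutual information is nonnegative, so $\Delta_i(\pi)\ge 0$, and this immediately gives $\mathcal{I}_i^{\rm sem}(\pi)\le\mathcal{I}_i^{\rm phy}(\pi)$ for $i\in\{{\rm c},{\rm s}\}$.

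\emph{Main obstacle.} The only delicate point is well-definedness. The variables may be continuous or mixed: $P_{\rm phy}$ is continuous, and $U_{\rm s}$ may be discrete. The chain rule should therefore be applied to general mutual information, defined as a Kullback--Leibler divergence between the joint law and the product of marginals, under the $\pi$-induced joint distribution. We will assume the relevant quantities are finite, which holds for Gaussian $\mathbf{n}_{\rm c}$ and $\mathbf{n}_{\rm s}$ with finite transmit power. This finiteness is what allows the two expansions to be equated without encountering $\infty-\infty$.
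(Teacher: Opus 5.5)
Your proof is correct and follows the paper's own argument: both remove $I_{\pi}(U_{\rm c};Y_{\rm c}\mid X)$ and $I_{\pi}(U_{\rm s};Y_{\rm s}\mid P_{\rm phy},X)$ to get $I_{\pi}(X;Y_{\rm c})=I_{\pi}(U_{\rm c},X;Y_{\rm c})$ and $I_{\pi}(P_{\rm phy};Y_{\rm s}\mid X)=I_{\pi}(U_{\rm s},P_{\rm phy};Y_{\rm s}\mid X)$, expand by the chain rule in the other order, and use non-negativity to obtain the inequalities. You should drop your aside claiming $U_{\rm c}\longrightarrow X\longrightarrow Y_{\rm c}$ holds automatically: the S2P policy may use CSI in $\boldsymbol{\kappa}_{\rm phy}$ correlated with $\mathbf{H}_{\rm c}$, so $\mathbf{H}_{\rm c}$ need not be independent of $U_{\rm c}$ given $X$, which is why the paper states the chain as a hypothesis and your proof needs it only as that hypothesis.
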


\begin{IEEEproof}
The chain $U_{\rm c}\longrightarrow X\longrightarrow Y_{\rm c}$ implies $I_{\pi}(U_{\rm c};Y_{\rm c}\mid X)=0$. The sensing relation gives $I_{\pi}(U_{\rm s};Y_{\rm s}\mid P_{\rm phy},X)=0$. The chain rule then gives
\begin{align*}
    I_{\pi}(X;Y_{\rm c})
    &=I_{\pi}(U_{\rm c},X;Y_{\rm c})\\
    &=I_{\pi}(U_{\rm c};Y_{\rm c})
    +I_{\pi}(X;Y_{\rm c}\mid U_{\rm c}),\\
    I_{\pi}(P_{\rm phy};Y_{\rm s}\mid X)
    &=I_{\pi}(U_{\rm s},P_{\rm phy};Y_{\rm s}\mid X)\\
    &=I_{\pi}(U_{\rm s};Y_{\rm s}\mid X)
    +I_{\pi}(P_{\rm phy};Y_{\rm s}\mid U_{\rm s},X).
\end{align*}
Substituting the information definitions gives \eqref{eq:communication_information_decomposition} and \eqref{eq:sensing_information_decomposition}. Non-negativity of conditional mutual information gives the stated inequalities.
\end{IEEEproof}

\begin{corollary}[Semantic Performance Bounds]
\label{cor:semantic_performance_bounds}
Under the conditions of Proposition~\ref{prop:task_relevant_information_decomposition}, the achieved task distortions satisfy
\begin{align}
    R_{U_{\rm c}}\left(\mathcal{L}_{\rm c}^{\rm sem}(\pi)\right)
    &\leq I_{\pi}\left(U_{\rm c};\widehat{U}_{\rm c}\right)
    \leq \mathcal{I}_{\rm c}^{\rm sem}(\pi)
    \leq \mathcal{I}_{\rm c}^{\rm phy}(\pi),
    \label{eq:communication_semantic_performance_bound}\\
    R_{U_{\rm s}\mid X}&\left(\mathcal{L}_{\rm s}^{\rm sem}(\pi)\right)
    \leq I_{\pi}\left(U_{\rm s};\widehat{U}_{\rm s}\mid X\right)
    \leq \mathcal{I}_{\rm s}^{\rm sem}(\pi)
    \nonumber\\
    &\leq \min\left\{I_{\pi}\left(U_{\rm s};P_{\rm phy}\mid X\right),\mathcal{I}_{\rm s}^{\rm phy}(\pi)\right\}.
    \label{eq:sensing_semantic_performance_bound}
\end{align}
\end{corollary}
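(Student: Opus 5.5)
Each chain of inequalities in Corollary~\ref{cor:semantic_performance_bounds} is built from three standard ingredients, applied in the order they appear from left to right: the operational definition of the rate--distortion function, the data-processing inequality along the receiver chain, and Proposition~\ref{prop:task_relevant_information_decomposition}. The communication and sensing branches are handled in parallel; the sensing branch carries the extra conditioning on $X$ and one additional upper bound.

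\emph{Leftmost inequalities.} For the communication branch, $\widehat{U}_{\rm c}$ is a (possibly randomized) function of $Y_{\rm c}$ through $D_{\rm c}$. Hence the pair $(U_{\rm c},\widehat{U}_{\rm c})$ under $\pi$ defines a test channel $p_\pi(\widehat{u}_{\rm c}\mid u_{\rm c})$ with $\mathbb{E}_\pi[d_{\rm c}(U_{\rm c},\widehat{U}_{\rm c})]=\mathcal{L}_{\rm c}^{\rm sem}(\pi)$. This test channel is feasible in the minimization that defines $R_{U_{\rm c}}(\mathcal{L}_{\rm c}^{\rm sem}(\pi))$, which gives $R_{U_{\rm c}}(\mathcal{L}_{\rm c}^{\rm sem}(\pi))\le I_\pi(U_{\rm c};\widehat{U}_{\rm c})$.

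For sensing, I will take $R_{U_{\rm s}\mid X}(D)$ in its conditional form. It is the minimum of $I(U_{\rm s};\widehat{U}_{\rm s}\mid X)$ over $p(\widehat{u}_{\rm s}\mid u_{\rm s},x)$ subject to $\mathbb{E}[d_{\rm s}]\le D$. The $\pi$-induced kernel is feasible, because $\widehat{U}_{\rm s}=D_{\rm s}(Y_{\rm s},X)$ may depend on $X$. The same argument then gives the first sensing inequality.

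\emph{Middle inequalities (data processing).} The decoder structure gives $U_{\rm c}\to Y_{\rm c}\to\widehat{U}_{\rm c}$, since the decoder randomness is independent of $U_{\rm c}$ given $Y_{\rm c}$. The data-processing inequality then yields $I_\pi(U_{\rm c};\widehat{U}_{\rm c})\le I_\pi(U_{\rm c};Y_{\rm c})=\mathcal{I}_{\rm c}^{\rm sem}(\pi)$.

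For sensing, conditioned on $X$, the output $\widehat{U}_{\rm s}$ is a function of $Y_{\rm s}$. So $U_{\rm s}\to Y_{\rm s}\to\widehat{U}_{\rm s}$ holds conditionally on $X$. The conditional data-processing inequality, obtained by expanding $I(U_{\rm s};Y_{\rm s},\widehat{U}_{\rm s}\mid X)$ both ways with the chain rule, gives $I_\pi(U_{\rm s};\widehat{U}_{\rm s}\mid X)\le\mathcal{I}_{\rm s}^{\rm sem}(\pi)$.

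\emph{Rightmost inequalities.} The bound $\mathcal{I}_i^{\rm sem}(\pi)\le\mathcal{I}_i^{\rm phy}(\pi)$ is exactly the conclusion of Proposition~\ref{prop:task_relevant_information_decomposition}. This closes the communication chain and gives the second term of the sensing minimum.

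The remaining bound $\mathcal{I}_{\rm s}^{\rm sem}(\pi)\le I_\pi(U_{\rm s};P_{\rm phy}\mid X)$ uses the conditional Markov chain $U_{\rm s}\to P_{\rm phy}\to Y_{\rm s}$ given $X$. That chain was established from \eqref{eq:sensing_conditional_observation_assumption} just before the proposition. Expanding $I_\pi(U_{\rm s};P_{\rm phy},Y_{\rm s}\mid X)$ in two ways gives
\begin{align*}
I_\pi(U_{\rm s};Y_{\rm s}\mid X)+I_\pi(U_{\rm s};P_{\rm phy}\mid Y_{\rm s},X)
&=I_\pi(U_{\rm s};P_{\rm phy}\mid X)+I_\pi(U_{\rm s};Y_{\rm s}\mid P_{\rm phy},X).
\end{align*}
The last term vanishes, and the non-negativity of $I_\pi(U_{\rm s};P_{\rm phy}\mid Y_{\rm s},X)$ gives the claim. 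Combining both upper bounds yields the minimum in \eqref{eq:sensing_semantic_performance_bound}.

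\emph{Main obstacle.} The step needing the most care is the sensing rate--distortion inequality. I must fix the definition of the conditional rate--distortion function so that the $\pi$-induced decoder, which uses $X$, is an admissible test channel, and check that the distortion constraint is met with equality. If the cited definition instead minimizes only over kernels $p(\widehat{u}_{\rm s}\mid u_{\rm s})$ with $X$ as side information at both ends, I will rely on the standard fact that a conditional rate--distortion function is bounded by any feasible conditional mutual information.
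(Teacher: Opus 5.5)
Your proposal is correct and follows the paper's proof step for step: feasibility of the $\pi$-induced test channel in the (conditional) rate--distortion minimization, the (conditional) data-processing inequality along $U_{\rm c}\to Y_{\rm c}\to\widehat{U}_{\rm c}$ and along $U_{\rm s}\to Y_{\rm s}\to\widehat{U}_{\rm s}$ given $X$, Proposition~\ref{prop:task_relevant_information_decomposition}, and the sensing Markov chain for the extra bound; your chain-rule expansions spell out what the paper states in one line each. Your conditional definition of $R_{U_{\rm s}\mid X}$, which minimizes over $p(\widehat{u}_{\rm s}\mid u_{\rm s},x)$, is already the standard one when $X$ is available at both ends, so the fallback you mention is unnecessary (a kernel that ignored $x$ would not in general admit the $\pi$-induced decoder anyway).
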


\begin{IEEEproof}
The rate--distortion relations give the first inequality in each bound. Since $\widehat{U}_{\rm c}$ is generated from $Y_{\rm c}$ and $\widehat{U}_{\rm s}$ is generated from $(Y_{\rm s},X)$, the data-processing inequality gives the second inequality. Proposition~\ref{prop:task_relevant_information_decomposition} gives the physical information upper bounds. The sensing Markov chain also gives $\mathcal{I}_{\rm s}^{\rm sem}(\pi)\leq I_{\pi}(U_{\rm s};P_{\rm phy}\mid X)$.
\end{IEEEproof}

The communication residual $\Delta_{\rm c}(\pi)$ is the information that $Y_{\rm c}$ contains about the transmitted physical signal beyond $U_{\rm c}$. The sensing residual $\Delta_{\rm s}(\pi)$ is the information that $Y_{\rm s}$ contains about the physical environmental parameters beyond $U_{\rm s}$ when $X$ is known. Increasing information contained only in a residual does not increase $\mathcal{I}_{i}^{\rm sem}(\pi)$.

The bounds in \eqref{eq:communication_semantic_performance_bound} and \eqref{eq:sensing_semantic_performance_bound} compare the information required for the achieved task distortion with the task-relevant information available in each observation and retained by the decoded output. For sensing, $I_{\pi}(U_{\rm s};P_{\rm phy}\mid X)$ represents the task-relevant information contained in the physical environmental parameters $P_{\rm phy}$ when $X$ is known. The quantity $\mathcal{I}_{\rm s}^{\rm phy}(\pi)$ represents the information that the sensing observation $Y_{\rm s}$ provides about these parameters under the same condition. Semantic sensing therefore depends on both the task relevance of these parameters and the information about them acquired through waveform--environment interaction.

For $j\in\{{\rm sem},{\rm phy}\}$, let $\boldsymbol{\mathcal{I}}^{j}(\pi)=[\mathcal{I}_{\rm c}^{j}(\pi),\mathcal{I}_{\rm s}^{j}(\pi)]^{T}$ and $\boldsymbol{\Delta}(\pi)=[\Delta_{\rm c}(\pi),\Delta_{\rm s}(\pi)]^{T}$. Proposition~\ref{prop:task_relevant_information_decomposition} gives the policy-wise relation
\begin{align}
    \boldsymbol{\mathcal{I}}^{\rm sem}(\pi)
    &=\boldsymbol{\mathcal{I}}^{\rm phy}(\pi)
    -\boldsymbol{\Delta}(\pi),
    \qquad \boldsymbol{\Delta}(\pi)\succeq\mathbf{0}.
    \label{eq:policywise_information_relation}
\end{align}
For a given policy, both residuals vanish if and only if
\begin{align}
    Y_{\rm c}
    &\perp\!\!\!\perp X\mid U_{\rm c},
    &\qquad Y_{\rm s}
    &\perp\!\!\!\perp P_{\rm phy}\mid (U_{\rm s},X).
    \label{eq:zero_residual_conditions}
\end{align}
These conditional-independence relations define task-sufficient observations. In this case, $Y_{\rm c}$ contains no information about $X$ beyond $U_{\rm c}$, and $Y_{\rm s}$ contains no information about $P_{\rm phy}$ beyond $U_{\rm s}$ when $X$ is known.

For $j\in\{{\rm sem},{\rm phy}\}$, $\mathcal{R}_{j}$ denotes the semantic information region for $j={\rm sem}$ and the physical information region for $j={\rm phy}$, defined as
\begin{align}
    \mathcal{R}_{j}
    &\triangleq
    \operatorname{cl}
    \left\{
        \left(
            \mathcal{I}_{\rm c}^{j}(\pi),
            \mathcal{I}_{\rm s}^{j}(\pi)
        \right)
        \,\middle|\,
        \pi\in\Pi
    \right\}.
    \label{eq:semantic_information_region}
\end{align}
The Pareto boundaries of the semantic information region $\mathcal{R}_{\rm sem}$ and physical information region $\mathcal{R}_{\rm phy}$ characterize the task-relevant and physical information trade-offs, respectively, under the same feasible policy set.

\begin{corollary}[Semantic and Physical Boundary Equivalence]
\label{cor:semantic_physical_boundary_equivalence}
If $\Delta_{\rm c}(\pi)=\Delta_{\rm s}(\pi)=0$ for every $\pi\in\Pi$, then $\mathcal{R}_{\rm sem}=\mathcal{R}_{\rm phy}$ and their Pareto boundaries coincide.
\end{corollary}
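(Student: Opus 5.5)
The plan is to compare the two regions policy by policy, using the relation \eqref{eq:policywise_information_relation} from Proposition~\ref{prop:task_relevant_information_decomposition}, and then to pass through the closure in \eqref{eq:semantic_information_region}.

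\emph{Step 1: pointwise equality.} Fix an arbitrary $\pi\in\Pi$. Under the hypothesis $\Delta_{\rm c}(\pi)=\Delta_{\rm s}(\pi)=0$, the vector relation \eqref{eq:policywise_information_relation} reduces to $\boldsymbol{\mathcal{I}}^{\rm sem}(\pi)=\boldsymbol{\mathcal{I}}^{\rm phy}(\pi)$. Equivalently, by \eqref{eq:communication_information_decomposition} and \eqref{eq:sensing_information_decomposition}, $\mathcal{I}_{\rm c}^{\rm sem}(\pi)=\mathcal{I}_{\rm c}^{\rm phy}(\pi)$ and $\mathcal{I}_{\rm s}^{\rm sem}(\pi)=\mathcal{I}_{\rm s}^{\rm phy}(\pi)$. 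Proposition~\ref{prop:task_relevant_information_decomposition} requires the Markov-chain assumptions, so I will state that the corollary is read under the same standing conditions, namely $U_{\rm c}\to X\to Y_{\rm c}$ for every $\pi$ together with \eqref{eq:sensing_conditional_observation_assumption}.

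\emph{Step 2: equality of the sets and their closures.} Step 1 shows that the maps $\pi\mapsto(\mathcal{I}_{\rm c}^{\rm sem}(\pi),\mathcal{I}_{\rm s}^{\rm sem}(\pi))$ and $\pi\mapsto(\mathcal{I}_{\rm c}^{\rm phy}(\pi),\mathcal{I}_{\rm s}^{\rm phy}(\pi))$ coincide on $\Pi$. Their images in $\mathbb{R}^2$ are therefore the same set. Taking the closure of identical sets gives identical sets, so $\mathcal{R}_{\rm sem}=\mathcal{R}_{\rm phy}$.

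\emph{Step 3: Pareto boundaries.} The Pareto boundary of a set in $\mathbb{R}^2$ is defined intrinsically from the set and the componentwise order $\succeq$. It consists of the points $\mathbf{r}$ in the set for which no other point $\mathbf{r}'$ in the set satisfies $\mathbf{r}'\succeq\mathbf{r}$ with $\mathbf{r}'\neq\mathbf{r}$. Since both regions are the same set, their Pareto boundaries coincide. It is also worth noting that each Pareto-optimal point is achieved, or approached in the limit, by the same policies under both metrics. This means the semantic and physical trade-offs are attained by the same S2P designs, not merely by the same values.

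The argument has no real obstacle. The only care needed is to make explicit that the hypothesis must hold for every $\pi\in\Pi$ and not only on the boundary. If some policy had positive residuals, its semantic point would lie strictly below its physical point, and the two regions could differ.
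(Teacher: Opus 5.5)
Your proposal is correct and follows essentially the same route as the paper: zero residuals turn \eqref{eq:policywise_information_relation} into pointwise equality $\boldsymbol{\mathcal{I}}^{\rm sem}(\pi)=\boldsymbol{\mathcal{I}}^{\rm phy}(\pi)$ on $\Pi$, so the closures over the same $\Pi$ agree and the Pareto boundaries coincide. You also state explicitly two things the paper leaves implicit: that the Markov-chain conditions of Proposition~\ref{prop:task_relevant_information_decomposition} are standing assumptions, and that the boundary points are attained by the same policies.
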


\begin{IEEEproof}
Under the zero-residual condition, \eqref{eq:policywise_information_relation} gives $\boldsymbol{\mathcal{I}}^{\rm sem}(\pi)=\boldsymbol{\mathcal{I}}^{\rm phy}(\pi)$ for every feasible policy. Taking the closure over the same set $\Pi$ gives $\mathcal{R}_{\rm sem}=\mathcal{R}_{\rm phy}$. Their Pareto boundaries therefore coincide.
\end{IEEEproof}

Equation~\eqref{eq:policywise_information_relation} shows that the semantic information coordinates depend on both physical information and the policy-dependent residuals. Outside the task-sufficient case, these residuals can reorder feasible policies in the semantic information region. Physical Pareto optimality therefore does not generally imply semantic Pareto optimality.

The same \revised{joint S2P design policy} determines the shared physical signal, which governs both observations and couples $\mathcal{I}_{\rm c}^{\rm sem}(\pi)$ and $\mathcal{I}_{\rm s}^{\rm sem}(\pi)$ through the physical subsystem. The performance bounds in Corollary~\ref{cor:semantic_performance_bounds} constrain the achievable task losses according to the available task-relevant information. Section~\ref{sec:optimization_implementation} adopts these losses as the communication and sensing design objectives under the shared physical-resource constraints.

\section{Task-Oriented Optimization and Realization Framework}
\label{sec:optimization_implementation}

In this section, we formulate the general SemISAC design problem. We then present E2E SemISAC optimization and modular SemISAC optimization as two solution approaches under the same semantic objectives. Finally, we describe their offline design procedures and common online execution architecture.

\subsection{General SemISAC Design Formulation}
\label{subsec:semantic_task_metrics}
\label{subsec:general_semantic_optimization}

\subsubsection{Semantic Task Objectives}

Each branch is evaluated by the semantic task loss $\mathcal{L}_{i}^{\rm sem}(\pi)$. We suppress its explicit dependence on $\pi$ in this section and write $\mathcal{L}_{i}^{\rm sem}$ for $i\in\{{\rm c},{\rm s}\}$. The task-dependent function $d_i(\cdot,\cdot)$ may represent reconstruction, feature, or inference loss for communication and sensing. Unlike physical metrics such as rate, signal-to-interference-plus-noise ratio, estimation error, and the CRB, these losses directly evaluate task effectiveness \cite{zhang2022jcas,wen2024taskiscc}. Physical metrics remain useful as tractable objectives or constraints, but their relationship to semantic performance depends on the task and receiver.

\subsubsection{Constrained and Weighted Design Formulations}

The general SemISAC design formulation jointly optimizes the communication and sensing semantic objectives under shared physical resources. \revised{Let $\boldsymbol{\omega}$ denote the parameters of the semantic functions, joint S2P design, transmit signal processing, and receive paths.} When sensing is the primary task, the corresponding formulation is given by
\begin{align}
    \begin{aligned}
    \min_{\boldsymbol{\omega}} \quad &\mathcal{L}_{\rm s}^{\rm sem},
    \quad \mathrm{s.t.} \quad
    \mathcal{L}_{\rm c}^{\rm sem}\leq\epsilon_{\rm c},
    \quad \mathbf{x}\in\mathcal{X}(P_{\max}),\\[-0.2em]
    &\mathcal{C}_{j}\left(\mathbf{x},\mathbf{a}_{\rm P},\boldsymbol{\omega}\right)\leq 0,
    \quad j=1,\ldots,J.
    \end{aligned}
    \label{eq:general_semisac_optimization}
\end{align}
Here, $\epsilon_{\rm c}$ denotes the required communication semantic-loss level, $\mathcal{X}(P_{\max})$ denotes the feasible waveform set under the power budget $P_{\max}$, and $\mathcal{C}_{j}(\mathbf{x},\mathbf{a}_{\rm P},\boldsymbol{\omega})$ represents the $j$th physical-design constraint. The task roles can be exchanged when communication is the primary objective. When neither task has a prescribed service constraint, the design minimizes a weighted sum of the two losses as
\begin{align}
    \min_{\boldsymbol{\omega}}
    \quad
    \beta
    \mathcal{L}_{\rm c}^{\rm sem}
    +
    (1-\beta)
    \mathcal{L}_{\rm s}^{\rm sem},
    \label{eq:weighted_semisac_optimization}
\end{align}
where $\beta\in[0,1]$ determines the communication--sensing task priority. To address the general SemISAC problem, we propose two optimization methods similar to those used in SemCom \cite{cai2025e2e}, namely E2E SemISAC optimization and modular SemISAC optimization.

\subsection{E2E SemISAC Optimization}
\label{subsec:end_to_end_semisac_optimization}

E2E SemISAC optimization evaluates the final communication and sensing semantic losses through the complete processing chain. \revised{The chain includes communication semantic encoding, sensing requirement construction, joint S2P design, transmit signal processing, physical interaction, and the receive paths. When it is differentiable, task-loss gradients propagate through these functions to update their trainable parameters.} For nondifferentiable chains or task metrics, policy-based or other non-gradient methods can optimize the same final semantic objectives \cite{getu2025semantic}.

Let $\mathcal{D}_{\rm tr}=\{(\boldsymbol{\chi}^{(n)},\mathbf{U}_{\rm c}^{(n)},\mathbf{U}_{\rm s}^{(n)})\}_{n=1}^{N_{\rm tr}}$ denote a representative offline training set, where $\boldsymbol{\chi}^{(n)}$ represents the information supplied to the trainable modules. For E2E SemISAC optimization, the full parameter set is defined as
\begin{align}
    \begin{aligned}
    \boldsymbol{\omega}
    \triangleq
    \big\{
        \boldsymbol{\Theta}_{\rm c}^{\rm e},
        \boldsymbol{\Theta}_{\rm s}^{\rm e},
        \boldsymbol{\omega}_{g},
        \revised{\boldsymbol{\omega}_{\rm tx},}
        \boldsymbol{\Theta}_{\rm c}^{\rm d},
        \boldsymbol{\Theta}_{\rm s}^{\rm d}
    \big\}.
    \end{aligned}
    \label{eq:semisac_parameter_set}
\end{align}
\revised{The six groups correspond to communication semantic encoding, sensing requirement construction, joint S2P design, transmit signal processing, and the communication and sensing receive paths, respectively.} The optimization updates the trainable entries, while the remaining parameters retain their prescribed values. In particular, $\boldsymbol{\omega}_{\rm tx}$ is empty when transmit signal processing has no independent trainable parameters, and a predefined sensing requirement rule keeps $\boldsymbol{\Theta}_{\rm s}^{\rm e}$ unchanged. \revised{E2E SemISAC optimization updates the trainable groups through the empirical problem}
\begin{align}
    \boldsymbol{\omega}^{\star}
    =
    \arg\min_{\boldsymbol{\omega}}
    \quad &
    \beta
    \widehat{\mathcal{L}}_{\rm c}^{\rm sem}
    \left(\boldsymbol{\omega}\right)
    +
    (1-\beta)
    \widehat{\mathcal{L}}_{\rm s}^{\rm sem}
    \left(\boldsymbol{\omega}\right)
    \nonumber\\[-0.3em] 
    &+
    \lambda_{\rm p}
    \mathcal{R}_{\rm p}
    \left(\boldsymbol{\omega}\right)
    +
    \sum_{j=1}^{J}
    \lambda_j
    \left[
        \widehat{\mathcal{C}}_j
        \left(\boldsymbol{\omega}\right)
    \right]_{+},
    \label{eq:joint_end_to_end_training}
\end{align}
where $\widehat{\mathcal{L}}_{i}^{\rm sem}$ is the empirical semantic task loss over $\mathcal{D}_{\rm tr}$ and $\mathcal{R}_{\rm p}$ is a penalty term for waveform or physical-resource requirements. The quantity $\widehat{\mathcal{C}}_j$ is the empirical value of constraint function $\mathcal{C}_j$, and $[a]_{+}=\max(a,0)$ extracts the positive part of $a$. The coefficients $\lambda_{\rm p}$ and $\lambda_j$ weight the physical-resource and constraint-violation penalties, respectively. A differentiable processing chain supports gradient-based joint training and, when data cannot be centralized, distributed training \cite{cai2025e2e,liu2023ambient}. This implementation generally requires representative training conditions and access to the complete chain during optimization.

\subsection{Modular SemISAC Optimization}
\label{subsec:modular_semantic_optimization}

Modular SemISAC optimization preserves the general semantic objectives while optimizing a selected parameter subset and holding the remaining parameters fixed for the chosen system realization. Modules in the fixed subset may be independently developed, pretrained, or nondifferentiable \cite{xie2021deepsc}. For a given partition, inter-module interfaces specify how the optimized modules affect the final semantic task losses through the processing chain \cite{ren2025semantic_empowered_isac}.

Using the aggregate parameter set $\boldsymbol{\omega}$, a modular realization defines the partition as
\begin{align}
    \boldsymbol{\omega}
    =
    \left\{
        \boldsymbol{\omega}_{\rm o},
        \boldsymbol{\omega}_{\rm f}
    \right\},
    \label{eq:modular_parameter_partition}
\end{align}
where $\boldsymbol{\omega}_{\rm o}$ and $\boldsymbol{\omega}_{\rm f}$ denote the optimized and fixed parameter subsets, respectively. Conditioned on $\boldsymbol{\omega}_{\rm f}$, the modular formulation retains the E2E objective structure given by
\begin{align}
    \boldsymbol{\omega}_{\rm o}^{\star}
    =
    \arg &\min_{\boldsymbol{\omega}_{\rm o}}
    \quad 
    \beta
    \widehat{\mathcal{L}}_{\rm c}^{\rm sem}
    \left(
        \boldsymbol{\omega}_{\rm o};
        \boldsymbol{\omega}_{\rm f}
    \right)
    +
    (1-\beta)
    \widehat{\mathcal{L}}_{\rm s}^{\rm sem}
    \left(
        \boldsymbol{\omega}_{\rm o};
        \boldsymbol{\omega}_{\rm f}
    \right)
    \nonumber\\
    &+
    \lambda_{\rm p}
    \mathcal{R}_{\rm p}
    \left(
        \boldsymbol{\omega}_{\rm o};
        \boldsymbol{\omega}_{\rm f}
    \right)
    +
    \sum_{j=1}^{J}
    \lambda_j
    \left[
        \widehat{\mathcal{C}}_j
        \left(
            \boldsymbol{\omega}_{\rm o};
            \boldsymbol{\omega}_{\rm f}
        \right)
    \right]_{+},
    \label{eq:modular_transmitter_training}
\end{align}
E2E SemISAC optimization in \eqref{eq:joint_end_to_end_training} jointly updates \revised{all trainable entries in $\boldsymbol{\omega}$}, whereas modular SemISAC optimization in \eqref{eq:modular_transmitter_training} updates $\boldsymbol{\omega}_{\rm o}$ with $\boldsymbol{\omega}_{\rm f}$ fixed. The modular loss $\widehat{\mathcal{L}}_{i}^{\rm sem}$ can be evaluated by propagating the optimized modules' outputs through the downstream processing chain. When repeated execution is costly or impractical, particularly for complex or nondifferentiable modules, a calibrated interface can approximate the relationship between the optimized modules' physical output and the final semantic loss \cite{he2026sensem}.

For task $i\in\{{\rm c},{\rm s}\}$, let $\mathbf{q}_{i}^{\rm phy}$ denote the physical-quality variables determined by the selected physical design and relevant to downstream task processing. Evaluating these downstream modules under representative physical conditions gives the calibrated approximation
\begin{align}
    \widehat{\mathcal{L}}_{i}^{\rm sem}
    \approx
    \widehat{\Gamma}_{i}
    \left(
        \mathbf{q}_{i}^{\rm phy};
        \boldsymbol{\omega}_{\rm f}
    \right),
    \quad
    i\in\{{\rm c},{\rm s}\},
    \label{eq:semantic_to_physical_mapping}
\end{align}
where $\widehat{\Gamma}_{i}(\cdot)$ is an analytically derived or empirically calibrated mapping from physical quality to an estimated semantic loss for the given module parameters $\boldsymbol{\omega}_{\rm f}$ \cite{wen2024taskiscc}.

\subsection{Offline Design and Online Execution}
\label{subsec:offline_online_framework}

During offline development, the E2E and modular SemISAC optimization methods determine their respective deployable parameter sets using representative source, environmental, channel, and task data. These offline data include complete environmental states and sensing targets for evaluating the semantic objectives.

Online, both methods use the same architecture with the resulting parameter sets $\widetilde{\boldsymbol{\omega}}=\boldsymbol{\omega}^{\star}$ for E2E SemISAC optimization and $\widetilde{\boldsymbol{\omega}}=\{\boldsymbol{\omega}_{\rm o}^{\star},\boldsymbol{\omega}_{\rm f}\}$ for modular SemISAC optimization. At each interval, $E_{\rm c}$ maps $\mathbf{S}_{\rm c}$ to $\mathbf{z}_{\rm c}$ using $\widetilde{\boldsymbol{\Theta}}_{\rm c}^{\rm e}$, while \revised{$E_{\rm s}$ constructs $\mathbf{q}_{\rm s}$ from the sensing task and available prior information using $\widetilde{\boldsymbol{\Theta}}_{\rm s}^{\rm e}$} according to Eqs.~(\ref{eq:communication_semantic_representation}) and~(\ref{eq:sensing_task_representation}). \revised{Joint S2P design selects the current transmit configuration, and transmit signal processing generates the shared signal as}
\begin{align}
    \begin{aligned}
    \mathbf{a}_{\rm P}
    ={}&
    g_{\widetilde{\boldsymbol{\omega}}_{g}}\left(
        \mathbf{z}_{\rm c},\mathbf{q}_{\rm s};
        \boldsymbol{\kappa}_{\rm phy}
    \right), \\
    \mathbf{x}
    ={}&
    \revised{F_{\rm tx}\left(
        \mathbf{z}_{\rm c};
        \mathbf{a}_{\rm P},
        \widetilde{\boldsymbol{\omega}}_{\rm tx}
    \right)}.
    \end{aligned}
    \label{eq:general_online_waveform_generation}
\end{align}
The shared signal produces $\mathbf{y}_{\rm c}$ and $\mathbf{y}_{\rm s}$ through Eqs.~(\ref{eq:physical_communication_channel}) and~(\ref{eq:physical_sensing_channel}). The receive functions $D_{\rm c}$ and $D_{\rm s}$ then map $\mathbf{y}_{\rm c}$ and $(\mathbf{y}_{\rm s},\mathbf{x})$ to $\widehat{\mathbf{U}}_{\rm c}$ and $\widehat{\mathbf{U}}_{\rm s}$, respectively, according to Eqs.~(\ref{eq:communication_task_output}) and~(\ref{eq:sensing_task_output}).


\section{Modular SemISAC Realization via Learning-Based OFDM TF Precoding}
\label{sec:ofdm_case_study}
\label{sec:deep_learning_sequence_precoder}

This section applies modular SemISAC optimization from Section~\ref{subsec:modular_semantic_optimization} to the OFDM realization in Fig.~\ref{fig:ofdm_realization}. We first specify the representative tasks and shared OFDM signal model, then formulate TF-precoder optimization using the modular \revised{task-loss interfaces}. In addition, we develop a learning-based S2P policy for low-complexity online execution.

\begin{figure}[t]
    \centering
    \includegraphics[width=\linewidth]{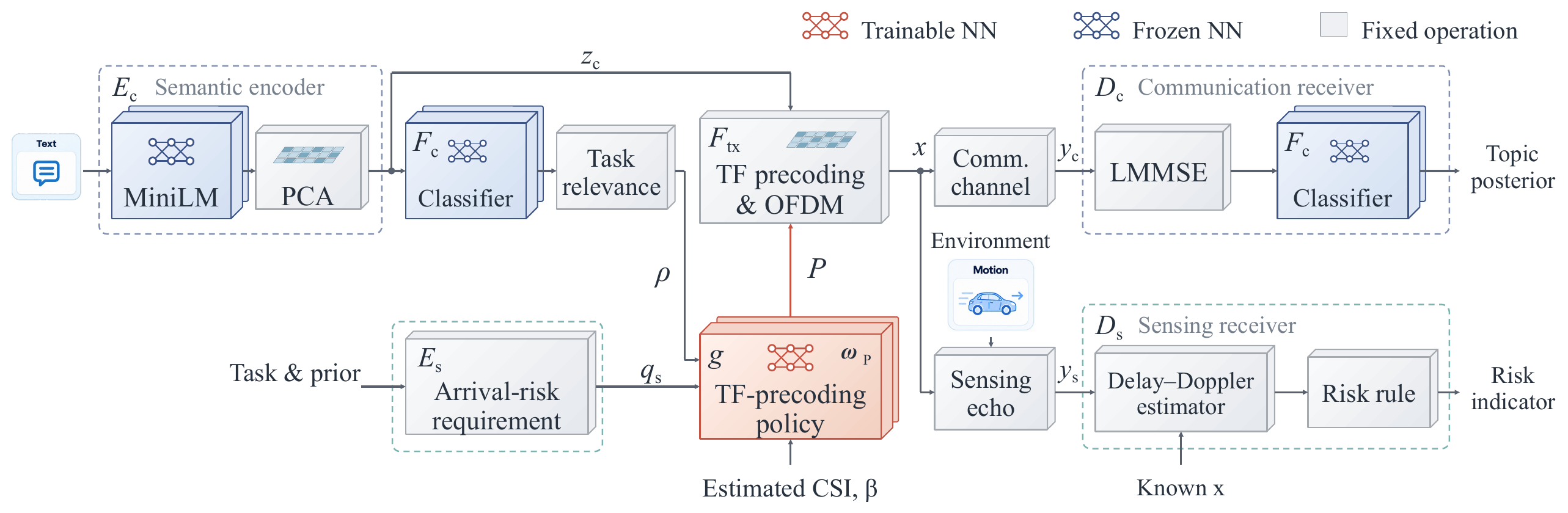}
    \caption{Modular OFDM realization of the proposed SemISAC framework.}
    \label{fig:ofdm_realization}
\end{figure}

\subsection{Communication and Sensing Task Instantiation}

\revised{The communication branch considers receiver-side topic inference for AG News text. The transmitter conveys a latent representation of each text sample, and the receiver uses the recovered representation to reproduce the reference posterior over four topic categories.} At frame $t$, the communication semantic encoder maps the text source to $\mathbf{z}_{{\rm c},t}=E_{\rm c}(\mathbf{S}_{{\rm c},t})\in\mathbb{C}^{Q}$. The implementation centers an all-MiniLM-L6-v2 sentence embedding \cite{allMiniLML6v2ModelCard,reimers2019sentenceBERT}, applies whitening based on principal component analysis \cite{kessy2018optimalWhitening} to obtain $2Q$ real dimensions, pairs them into $Q$ complex components, and applies normalization. The PCA block in Fig.~\ref{fig:ofdm_realization} represents these preprocessing operations applied to the MiniLM embedding.

Let $F_{\rm c}(\cdot)$ denote the frozen four-class classifier used at both the transmitter and receiver, and let $\widehat{\mathbf{z}}_{{\rm c},t}$ denote the recovered representation. The communication receive path $D_{\rm c}$ maps the received signal $\mathbf{y}_{{\rm c},t}$ to the topic posterior $\widehat{\mathbf{p}}_t$. The communication task target, receiver output, and Kullback--Leibler (KL) sample loss are
\begin{align}
    \begin{aligned}
    &\mathbf{U}_{{\rm c},t}
    \equiv
    \mathbf{p}_t
    =
    F_{\rm c}\left(\mathbf{z}_{{\rm c},t}\right),
    \quad
    \widehat{\mathbf{U}}_{{\rm c},t}
    \equiv
    \widehat{\mathbf{p}}_t
    =
    F_{\rm c}\left(\widehat{\mathbf{z}}_{{\rm c},t}\right),
    \\
    &d_{\rm c}
    \left(
        \mathbf{U}_{{\rm c},t},
        \widehat{\mathbf{U}}_{{\rm c},t}
    \right)
    =
    D_{\rm KL}
    \left(
        \mathbf{p}_t
        \,\Vert\,
        \widehat{\mathbf{p}}_t
    \right).
    \end{aligned}
    \label{eq:case_communication_task_instantiation}
\end{align}
The communication semantic loss is $\mathcal{L}_{\rm c}^{\rm sem}=\mathbb{E}[D_{\rm KL}(\mathbf{p}_t\Vert\widehat{\mathbf{p}}_t)]$. Let $\boldsymbol{\rho}_t$ denote the normalized componentwise task relevance of $\mathbf{z}_{{\rm c},t}$ derived from the local sensitivity of the classifier $F_{\rm c}$. \revised{It guides the protection of the latent components, while $\mathbf{z}_{{\rm c},t}$ carries the communication content through the physical channel.} The construction of $\boldsymbol{\rho}_t$ is given in Appendix~\ref{app:ofdm_proxy_construction}.

The sensing branch determines whether an approaching target reaches a given boundary within the safety horizon $T_{\rm safe}$. Let $\boldsymbol{\theta}_t=[\theta_{\tau,t},\theta_{\nu,t}]^T$ denote the unknown current normalized delay--Doppler state. The physical delay and Doppler shift are $\tau_t^{\rm phy}=\tau_0+s_{\tau}\theta_{\tau,t}$ and $\nu_t^{\rm phy}=\nu_0+s_{\nu}\theta_{\nu,t}$, where $\tau_0$ and $\nu_0$ are reference values and $s_{\tau}$ and $s_{\nu}$ are normalization scales. In the monostatic setting considered here, an approaching target with $\nu_t^{\rm phy}<0$ has distance and speed $d_t=c\tau_t^{\rm phy}/2$ and $v_t=-c\nu_t^{\rm phy}/(2f_c)>0$, where $c$ is the speed of light and $f_c$ is the carrier frequency. For a boundary at distance $d_b$, the arrival time is
\begin{align}
    t_{\rm arr}
    \left(
        \boldsymbol{\theta}_t
    \right)
    =
    \frac{d_t-d_b}{v_t}
    =
    -
    \frac{
        f_c
        \left(
            \tau_t^{\rm phy}
            -
            2d_b/c
        \right)
    }{
        \nu_t^{\rm phy}
    }.
    \label{eq:case_sensing_arrival_time_mapping}
\end{align}
Accordingly, the sensing task target is $U_{{\rm s},t}=\mathbbm{1}\left[t_{\rm arr}\left(\boldsymbol{\theta}_t\right)\leq T_{\rm safe}\right]$. For this simple arrival-risk task, the explicit mapping from distance and speed to the required decision enables analytical sensing requirement construction and semantic decoding.

Before probing, the analytical requirement function $E_{\rm s}$ constructs $\mathbf{q}_{{\rm s},t}=E_{\rm s}(\mathcal{T}_{\rm s};\mathcal{K}_{\rm s},\boldsymbol{\Theta}_{\rm s}^{\rm e})\equiv\operatorname{vech}(\mathbf{W}_{{\rm s},t})$ from the transmitter-available prior state $\overline{\boldsymbol{\theta}}_t$ and arrival-risk task. The normalized, symmetric matrix $\mathbf{W}_{{\rm s},t}\in\mathbb{R}^{2\times2}$ is formed from the outer product of the gradient of a smooth arrival-risk score at this prior, as derived in Appendix~\ref{app:ofdm_proxy_construction}. The sensing semantics enter the shared waveform design through $\mathbf{W}_{{\rm s},t}$, which weights local delay--Doppler error directions according to their relevance to the arrival-risk decision. The operator $\operatorname{vech}(\cdot)$ stacks the lower-triangular entries of a symmetric matrix column by column into a vector. In $D_{\rm s}$, sensing signal processing estimates delay and Doppler from $(\mathbf{y}_{{\rm s},t},\mathbf{x}_t)$ and converts them to the normalized state estimate $\widehat{\boldsymbol{\theta}}_t$. The sensing semantic decoder then compares the arrival time computed from this estimate with $T_{\rm safe}$. The resulting sensing task output is
\begin{align}
    \widehat{U}_{{\rm s},t}
    =
    D_{\rm s}
    \bigl(
        \mathbf{y}_{{\rm s},t},
        \mathbf{x}_t;
        \mathcal{T}_{\rm s},
        \mathcal{K}_{\rm s},
        \boldsymbol{\Theta}_{\rm s}^{\rm d}
    \bigr)
    =
    \mathbbm{1}
    \bigl[
        t_{\rm arr}
        \bigl(
            \widehat{\boldsymbol{\theta}}_t
        \bigr)
        \leq
        T_{\rm safe}
    \bigr].
    \label{eq:case_sensing_task_output}
\end{align}
The resulting semantic loss is
$\mathcal{L}_{\rm s}^{\rm sem}
=
\Pr(\widehat{U}_{{\rm s},t}\neq U_{{\rm s},t})$,
with the delay--Doppler estimator detailed in Appendix~\ref{app:ofdm_proxy_construction}.

\subsection{Shared OFDM Signal}

We consider a monostatic ISAC setting, where the transmitted OFDM waveform probes the environment and the echo is received by the sensing receiver. Each frame consists of $N$ subcarriers and $M$ OFDM symbols, yielding $L=NM$ TF resource elements.

Given the normalized communication representation $\mathbf{z}_{{\rm c},t}$ and TF precoder $\mathbf{P}_t\in\mathbb{C}^{L\times Q}$, transmit signal processing forms the shared OFDM resource-grid symbols as
\begin{align}
    \mathbf{x}_t
    &=
    \mathbf{P}_t\mathbf{z}_{{\rm c},t}
    \in\mathbb{C}^{L},
    \label{eq:case_shared_ofdm_sequence}
\end{align}
\revised{Here, $\mathbf{x}_t$ is the vectorized TF resource grid. The matrix operation combines the continuous-valued feature-to-symbol mapping with precoding and resource mapping.} OFDM modulation, cyclic-prefix processing, and the corresponding receiver transforms are represented through the equivalent TF-domain observation models below.

For the communication branch, the physical channel in Eq.~(\ref{eq:physical_communication_channel}) is modeled as a diagonal OFDM channel
$\mathbf{H}_{{\rm c},t}=\operatorname{diag}(\mathbf{h}_t)\in\mathbb{C}^{L\times L}$, yielding
\begin{align}
    \mathbf{y}_{{\rm c},t}
    =
    \mathbf{H}_{{\rm c},t}\mathbf{x}_t
    +
    \mathbf{n}_{{\rm c},t}
    =
    \operatorname{diag}
    \left(
        \mathbf{h}_t
    \right)
    \mathbf{P}_t
    \mathbf{z}_{{\rm c},t}
    +
    \mathbf{n}_{{\rm c},t}.
    \label{eq:case_communication_observation}
\end{align}
Here, $\mathbf{h}_t\in\mathbb{C}^{L}$ is the vectorized communication-channel response, and $\widehat{\mathbf{H}}_{{\rm c},t}=\operatorname{diag}(\widehat{\mathbf{h}}_t)$ denotes the corresponding estimated channel matrix available to joint S2P design. This estimate supplies the communication-channel information in $\boldsymbol{\kappa}_{\rm phy}$.

For the sensing branch, $\mathbf{x}_t$ is reshaped as $\mathbf{X}_t\in\mathbb{C}^{N\times M}$. For a single target with reflection coefficient $\alpha_t$, delay $\tau_t^{\rm phy}$, and Doppler shift $\nu_t^{\rm phy}$, the monostatic OFDM sensing channel is modeled as \cite{braun2011singleTargetOFDMRadar}
\begin{align}
    H_{{\rm s},t}[n,m]
    =
    \alpha_t
    e^{-\mathrm{j}2\pi n\Delta f\tau_t^{\rm phy}}
    e^{\mathrm{j}2\pi mT_{\rm sym}\nu_t^{\rm phy}},
    \label{eq:case_ofdm_sensing_channel}
\end{align}
and the corresponding sensing observation in Eq.~(\ref{eq:physical_sensing_channel}) is
\begin{align}
    Y_{{\rm s},t}[n,m]
    =
    H_{{\rm s},t}[n,m]X_t[n,m]
    +
    N_{{\rm s},t}[n,m],
    \label{eq:case_ofdm_sensing_observation}
\end{align}
for $n=0,\ldots,N-1$ and $m=0,\ldots,M-1$, where $\Delta f$ is the subcarrier spacing and $T_{\rm sym}$ is the OFDM symbol duration. The model assumes a sufficient cyclic prefix and negligible intercarrier interference. The vector $\mathbf{y}_{{\rm s},t}=\operatorname{vec}(\mathbf{Y}_{{\rm s},t})$ provides the sensing observation used by \revised{sensing signal processing and the sensing semantic decoder}.

\subsection{Modular TF-Precoder Optimization}
\label{subsec:case_semantic_to_physical_instantiation}

The modular parameter partition specifies which functions in the general architecture are optimized in this realization. Let $\boldsymbol{\omega}_{\rm P}$ denote the parameters of the TF-precoding policy. The optimized and fixed parameter subsets are
\begin{align}
    \begin{aligned}
    \boldsymbol{\omega}_{\rm o}
    =
    \left\{
        \boldsymbol{\omega}_{\rm P}
    \right\},
    \quad
    \boldsymbol{\omega}_{g}
    \equiv
    \boldsymbol{\omega}_{\rm P},
    \quad
    \boldsymbol{\omega}_{\rm f}
    =
    \left\{
        \boldsymbol{\Theta}_{\rm c}^{\rm e},
        \boldsymbol{\Theta}_{\rm s}^{\rm e},
        \boldsymbol{\Theta}_{\rm c}^{\rm d},
        \boldsymbol{\Theta}_{\rm s}^{\rm d}
    \right\}.
    \end{aligned}
    \label{eq:case_modular_parameter_partition}
\end{align}
During offline training in this OFDM realization, only joint S2P design is optimized through $\boldsymbol{\omega}_{\rm P}$. The MiniLM model and classifier $F_{\rm c}$ remain frozen, while the non-neural processing blocks in Fig.~\ref{fig:ofdm_realization} use fixed rules. Each policy is trained for the specified OFDM dimensions, power budget, and non-waveform resource configuration. \revised{The selected TF precoder $\mathbf{P}_t$ realizes $\mathbf{a}_{\rm P}$, and transmit signal processing applies it to $\mathbf{z}_{{\rm c},t}$.}

This subsection uses the modular task-loss interface in \eqref{eq:semantic_to_physical_mapping} for TF-precoder optimization. The communication physical-quality variable $\mathbf{C}_{{\rm c},t}\in\mathbb{C}^{Q\times Q}$ is the latent-recovery error covariance of the linear minimum mean-square error (LMMSE) receiver. The sensing variable $\mathbf{C}_{{\rm s},t}\in\mathbb{R}^{2\times2}$ is the CRB matrix for the normalized delay--Doppler state, with the unknown complex reflection coefficient treated as a nuisance parameter. The task modules provide $\boldsymbol{\rho}_t$ and $\mathbf{W}_{{\rm s},t}$ to construct the task-aware proxies
\begin{align}
    D_{{\rm c},t}^{\rm sem}
    =
    \widehat{\Gamma}_{\rm c}
    \left(
        \mathbf{C}_{{\rm c},t};
        \boldsymbol{\rho}_t
    \right)
    =
    \sum_{q=1}^{Q}
    \rho_{q,t}\overline{c}_{q,t},
    \label{eq:case_communication_semantic_constraint}\\
    D_{{\rm s},t}^{\rm sem}
    =
    \widehat{\Gamma}_{\rm s}
    \left(
        \mathbf{C}_{{\rm s},t};
        \mathbf{W}_{{\rm s},t}
    \right)
    =
    \operatorname{tr}
    \left(
        \mathbf{W}_{{\rm s},t}
        \mathbf{C}_{{\rm s},t}
    \right).
    \label{eq:case_sensing_semantic_proxy}
\end{align}
Here, $\overline{c}_{q,t}=[\mathbf{C}_{{\rm c},t}]_{q,q}/(v_q+\varepsilon_{\rm num})$ is the variance-normalized recovery error of component $q$, where $v_q=\mathbb{E}_{\rm tr}[|z_{{\rm c},q}|^2]$ is its training-set variance and $\varepsilon_{\rm num}>0$ provides numerical stability. The communication proxy weights latent-recovery errors by classifier sensitivity, while the sensing proxy weights delay--Doppler uncertainty by the local sensitivity of a smooth arrival-risk score. These task-aware physical proxies guide precoder optimization. Their derivations are given in Appendix~\ref{app:ofdm_proxy_construction}, and their relationship to the final losses $\mathcal{L}_{\rm c}^{\rm sem}$ and $\mathcal{L}_{\rm s}^{\rm sem}$ is evaluated in Fig.~\ref{fig:task_metric_validation}.

In this realization, joint S2P design uses $\boldsymbol{\rho}_t$, derived from $\mathbf{z}_{{\rm c},t}$ through the classifier $F_{\rm c}$, as its communication-side input. The available design information is represented by
\begin{align}
    \boldsymbol{\chi}_t
    =
    \left\{
    \widehat{\mathbf{H}}_{{\rm c},t},
    \boldsymbol{\rho}_t,
    \mathbf{q}_{{\rm s},t}
    \right\}.
    \label{eq:case_policy_information}
\end{align}
The parameter $\beta\in[0,1]$ specifies the communication--sensing priority. The TF-precoding policy $\pi_{\boldsymbol{\omega}_{\rm P}}$ realizes the joint S2P function $g_{\boldsymbol{\omega}_{g}}$ for this OFDM system. It maps the pair $(\boldsymbol{\chi}_t,\beta)$ to the TF precoder $\mathbf{P}_t =\pi_{\boldsymbol{\omega}_{\rm P}} \left(\boldsymbol{\chi}_t,\beta\right)$ used in \eqref{eq:case_shared_ofdm_sequence}.

Let
$\overline{D}_{i}^{\rm sem}(\pi)
=
\mathbb{E}_{\mathcal{D}}
[D_{i,t}^{\rm sem}(\pi)]$
denote the average proxy over the offline training distribution
$\mathcal{D}$. For the given communication requirement $\epsilon_{\rm c}$ expressed in the calibrated proxy domain,
the TF-precoder design is formulated as
\begin{align}
    \min_{\pi} \quad
     \overline{D}_{\rm s}^{\rm sem}(\pi),
    \quad \mathrm{s.t.}\quad
    \overline{D}_{\rm c}^{\rm sem}(\pi) \leq \epsilon_{\rm c},
    \nonumber\\[-1mm]
    \frac{1}{L}
    \left\|
        \pi(\boldsymbol{\chi}_t,\beta)
    \right\|_{F}^{2}
    = P_{\max},
    \quad \forall t.
    \label{eq:case_sequence_precoding_problem}
\end{align}
For offline optimization, the corresponding weighted formulation under the same transmit-power constraint is
\begin{align}
    \min_{\pi}
    \quad
    \beta
    \overline{D}_{\rm c}^{\rm sem}(\pi)
    +
    (1-\beta)
    \overline{D}_{\rm s}^{\rm sem}(\pi),
    \qquad
    \beta\in[0,1].
    \label{eq:case_weighted_proxy_scalarization}
\end{align}

\subsection{Learning-Based S2P Policy}
\label{subsec:sequence_precoder_architecture}

The TF-precoder design depends on the channel state and task information through nonlinear task-loss proxies, giving a high-dimensional and generally non-convex optimization problem. Solving it at each transmission interval requires repeated LMMSE and Fisher information matrix (FIM) calculations. We therefore implement the S2P policy $\pi_{\boldsymbol{\omega}_{\rm P}}$ using a neural generator $G_{\boldsymbol{\omega}_{\rm P}}$ followed by power normalization. The generator extracts features from the estimated communication channel $\widehat{\mathbf{H}}_{{\rm c},t}$. It combines these features with $\boldsymbol{\rho}_t$, $\mathbf{q}_{{\rm s},t}$, and $\beta$ to produce the complex-valued unnormalized precoder $\widetilde{\mathbf{P}}_t$. The unnormalized and normalized precoders are given by
\begin{align}
    \widetilde{\mathbf{P}}_t
    =
    G_{\boldsymbol{\omega}_{\rm P}}
    \left(
        \boldsymbol{\chi}_t,\beta
    \right),
    \qquad
    \mathbf{P}_t
    =
    \sqrt{L P_{\max}}\,
    \widetilde{\mathbf{P}}_t
    \left\|
        \widetilde{\mathbf{P}}_t
    \right\|_{F}^{-1}.
    \label{eq:learning_sequence_precoder_mapping}
\end{align}
The normalization enforces the transmit-power constraint in \eqref{eq:case_sequence_precoding_problem}.

During offline training, $\beta_i$ is sampled from $\mathcal{U}[0,1]$ to cover different communication--sensing operating priorities. For each sample $i$, the policy produces $\mathbf{P}_i=\pi_{\boldsymbol{\omega}_{\rm P}}(\boldsymbol{\chi}_i,\beta_i)$ through \eqref{eq:learning_sequence_precoder_mapping}. Following the weighted design in \eqref{eq:case_weighted_proxy_scalarization}, the policy parameters are optimized over a mini-batch of size $B$ as
\begin{align}
    \min_{\boldsymbol{\omega}_{\rm P}}
    \quad
    \widehat{\mathcal{L}}_{\rm P}
    =
    B^{-1}
    \sum_{i=1}^{B}
    \left[
        \beta_i
        \widetilde{D}_{{\rm c},i}^{\rm sem}
        \left(
            \mathbf{P}_i
        \right)
        +
        (1-\beta_i)
        \widetilde{D}_{{\rm s},i}^{\rm sem}
        \left(
            \mathbf{P}_i
        \right)
    \right].
    \label{eq:sequence_precoder_training_loss}
\end{align}
Here, $\widetilde{D}_{b,i}^{a}$ denotes the scale-normalized per-sample proxy for branch $b\in\{{\rm c},{\rm s}\}$ and type $a\in\{{\rm sem},{\rm phy}\}$. Equation~\eqref{eq:sequence_precoder_training_loss} uses the two task-aware semantic proxies. The gradient of $\widehat{\mathcal{L}}_{\rm P}$ is backpropagated through the proxy calculations and the precoder mapping to update $\boldsymbol{\omega}_{\rm P}$.

Online, the policy uses the optimized parameters $\boldsymbol{\omega}_{\rm P}$ to map $(\boldsymbol{\chi}_t,\beta)$ to $\mathbf{P}_t$ in one forward pass, implementing joint S2P design. Transmit signal processing then applies the selected precoder through \eqref{eq:case_shared_ofdm_sequence} and the OFDM operations. \revised{The design function can also be realized by numerical optimization, unfolded optimization, reinforcement learning, or other physical-design methods.}

\section{Simulation Results}

This section evaluates the task relevance and operating behavior of the proposed SemISAC design. After describing the simulation configuration and comparison methods, we examine proxy validity, performance across transmit power and task requirements, robustness to imperfect semantic information, and task-dependent feature protection.

\subsection{Simulation Settings}

The simulations consider an OFDM system with $N=M=16$ and a communication latent dimension $Q=8$. Unless otherwise specified, the carrier frequency and subcarrier spacing are $f_c=28$ GHz and $\Delta f=120$ kHz, respectively. The communication channel contains three paths, the reference noise power is $\sigma_0^2=1$, and the normalized mean-square error (NMSE) of the communication channel state information (CSI) is $-10$ dB.

For semantic communication, we use the four-class AG News dataset \cite{zhang2015textUnderstanding}, with 20,000 training, 5,000 validation, and 7,600 testing samples. The communication component errors are normalized by feature variances estimated from the normalized AG News training latents, with $\varepsilon_{\rm num}=10^{-8}$ providing numerical stability. This normalization accounts for the different variances of the communication latent components. For semantic sensing, the binary arrival-risk task uses a predefined 10-m safety boundary and a safety horizon of $T_{\rm safe}=2$ s. The target distance and speed are uniformly distributed over $[15,60]$ m and $[5,25]$ m/s, respectively, and $\sigma_T=0.5$ s controls the transition width of the smooth arrival-risk score.

Each learned method is trained independently for every power condition for 200 epochs, with $\beta_i$ sampled from $\mathcal{U}[0,1]$ during training. Training uses AdamW \cite{loshchilov2019adamW} with a learning rate of $10^{-3}$, weight decay of $10^{-5}$, gradient norm clipping at 10, and a batch size of 128. During training, the sensing prior is a transmitter-side prediction available before the current probing interval. We model it by perturbing the delay and Doppler values used to construct $\mathbf{W}_{{\rm s},i}$ with Gaussian errors whose standard deviations equal 5\% of the corresponding parameter ranges. The transmit-power studies sweep $P_{\max}\in\{-10,-5,0,5,10,15\}$ dB, while the trade-off studies sweep $\beta\in\{0,0.1,\ldots,1\}$. For comparison, we consider four benchmarks spanning uniform precoding, conventional physical optimization, and single-task semantic conditioning.
\begin{itemize}
    \item Benchmark 1 (Uniform Precoding). The first $Q$ columns of a DFT matrix construct the precoder. This benchmark satisfies the same total-power constraint and does not adapt to the channel or task priors.
    \item Benchmark 2 (Conventional Physical ISAC). Following the deep-learning-based physical ISAC method in \cite{xiaoqi2023learnCRB}, the neural precoder is trained using the $\beta_i$-weighted physical objective $\beta_i\widetilde{D}_{{\rm c},i}^{\rm phy}+(1-\beta_i)\widetilde{D}_{{\rm s},i}^{\rm phy}$.
    \item Benchmark 3 (Semantic Communication Aware ISAC). The communication importance $\boldsymbol{\rho}_i$ is used to optimize $\widetilde{D}_{{\rm c},i}^{\rm sem}$, while the sensing design retains the physical objective $\widetilde{D}_{{\rm s},i}^{\rm phy}$.
    \item Benchmark 4 (Semantic Sensing Aware ISAC). The sensing matrix $\mathbf{W}_{{\rm s},i}$ is used to optimize $\widetilde{D}_{{\rm s},i}^{\rm sem}$, while the communication design retains equal feature importance and the physical objective $\widetilde{D}_{{\rm c},i}^{\rm phy}$.
\end{itemize}

\subsection{Validation and Performance of Semantic ISAC}

\begin{figure}[tb]
    \centering
    \includegraphics[width=\linewidth,trim=0.00bp 13.71bp 0.00bp 27.45bp,clip]{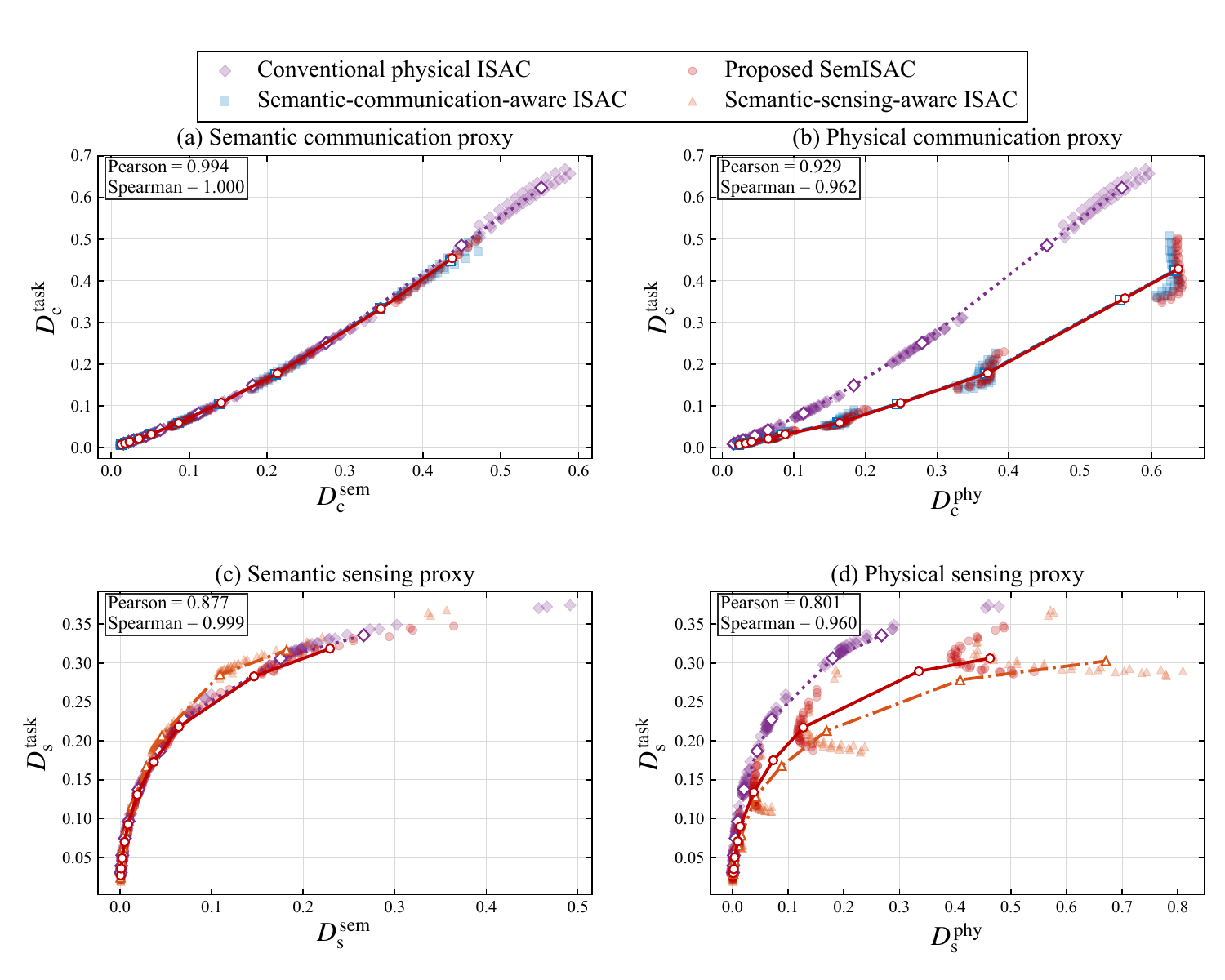}
    \caption{Validation of the semantic and physical proxies against downstream task errors.}
    \label{fig:task_metric_validation}
\end{figure}

We first compare the semantic and physical proxies with downstream communication and sensing task errors in Fig.~\ref{fig:task_metric_validation}. Here, $D_{\rm c}^{\rm task}$ denotes the classifier-output KL divergence, while $D_{\rm s}^{\rm task}$ denotes the risk-decision error probability. For communication, Fig.~\ref{fig:task_metric_validation}(a) shows an almost monotonic relationship between the normalized semantic proxy and $D_{\rm c}^{\rm task}$. The physical proxy in Fig.~\ref{fig:task_metric_validation}(b) has a weaker association with this task error and greater separation across methods, suggesting that similar unweighted physical distortions can correspond to different classifier-output KL divergences. This behavior is consistent with semantic communication designs protecting latent components according to their sample-dependent classification relevance, whereas methods without communication importance weight all components uniformly. Overall, the semantic proxy is more strongly associated with communication task error over the evaluated operating points.

For sensing, Fig.~\ref{fig:task_metric_validation}(c) shows that the semantic proxy preserves an almost monotonic ordering of the arrival-risk decision error, although the proxy and nonlinear task error remain numerically distinct. The physical proxy in Fig.~\ref{fig:task_metric_validation}(d) has a weaker association with this error and more pronounced offsets across methods. The semantic proxy weights delay--Doppler uncertainty directions by their relevance to the arrival-risk decision, whereas the physical proxy weights all sensing parameter errors uniformly.

\begin{figure}[tb]
    \centering
    \includegraphics[width=\linewidth,trim=0.00bp 13.71bp 0.00bp 32.69bp,clip]{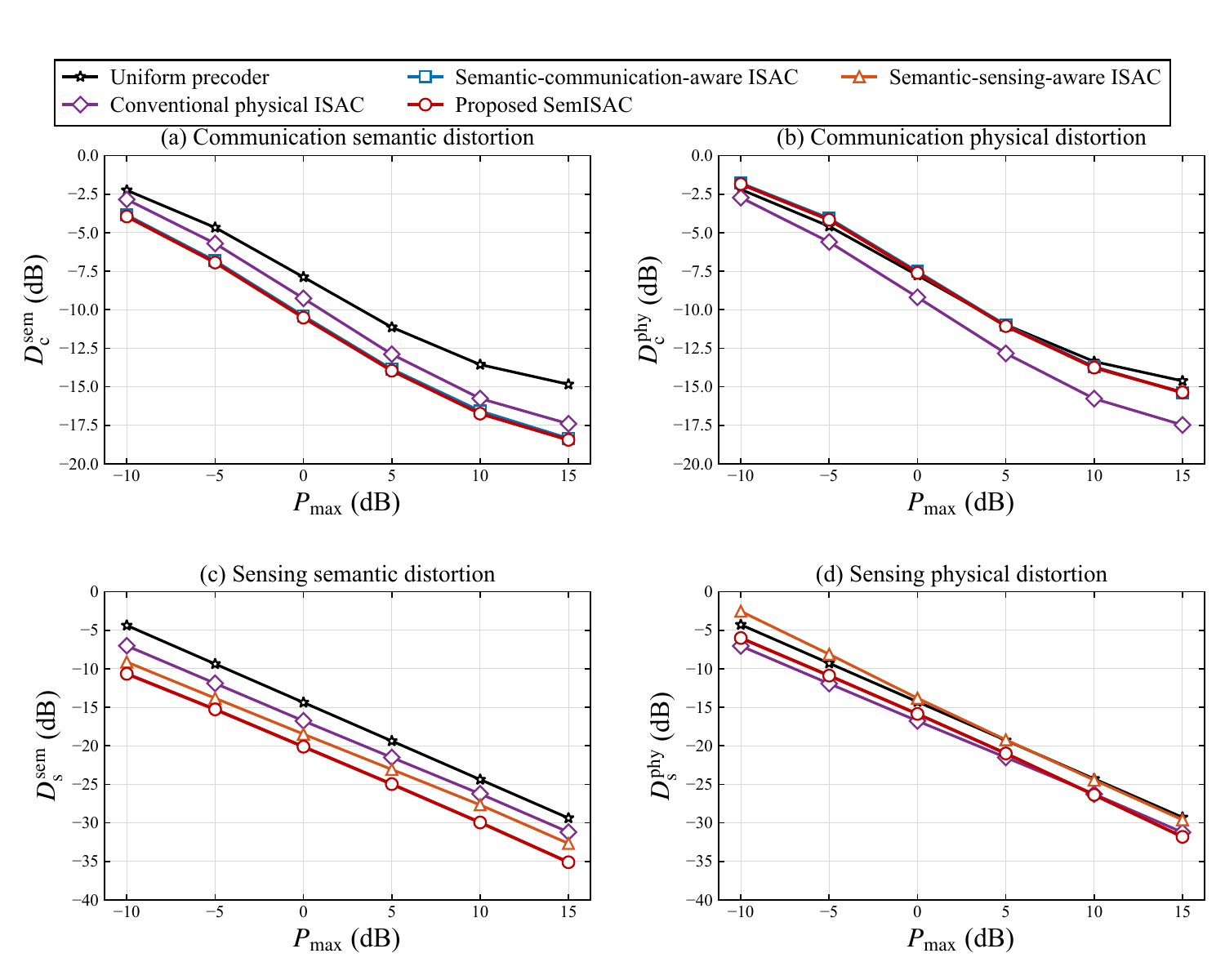}
    \caption{Communication and sensing proxy distortions in dB versus maximum transmit power.}
    \label{fig:transmit_power_metrics}
\end{figure}

We then examine how the maximum transmit power affects the four communication and sensing proxy distortions under balanced task priority and perfect semantic priors. Fig.~\ref{fig:transmit_power_metrics} reports all four distortions in dB. For communication, Fig.~\ref{fig:transmit_power_metrics}(a) shows that the semantic distortion decreases with transmit power for all methods, while proposed SemISAC and Semantic Communication Aware ISAC remain below the methods without communication importance. In contrast, Fig.~\ref{fig:transmit_power_metrics}(b) shows that Conventional Physical ISAC attains a lower physical distortion than proposed SemISAC over most of the power range. This reversal is consistent with the semantic designs protecting latent components according to their classification relevance rather than minimizing the unweighted latent error. Under the evaluated conditions, a lower communication physical distortion therefore does not necessarily correspond to a lower communication semantic distortion.

For sensing, Fig.~\ref{fig:transmit_power_metrics}(c) shows that the proposed SemISAC maintains the lowest semantic distortion across the considered power range. Fig.~\ref{fig:transmit_power_metrics}(d), however, shows that Conventional Physical ISAC has a lower physical distortion in the lower-power region, while proposed SemISAC becomes comparable and then lower toward the high-power endpoint. This reversal is consistent with the sensing task weighting emphasizing uncertainty directions relevant to the arrival-risk decision rather than all parameter errors uniformly. The sensing semantic and physical metrics therefore characterize different aspects of performance over the tested power range.

\begin{figure}[tb]
    \centering
    \includegraphics[width=0.85\linewidth,trim=0.00bp 13.71bp 0.00bp 3.56bp,clip]{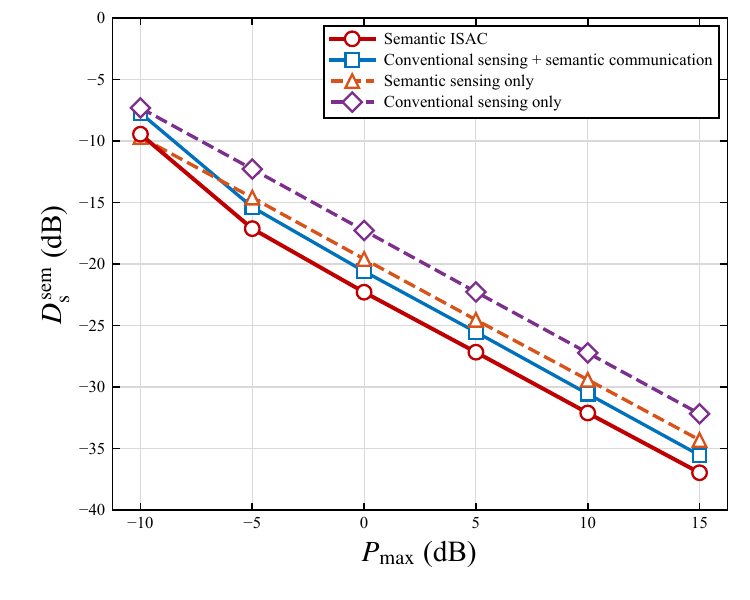}
    \caption{Sensing semantic distortion in dB versus maximum transmit power.}
    \label{fig:transmit_power_sensing}
\end{figure}

Fig.~\ref{fig:transmit_power_sensing} evaluates sensing semantic distortion in dB under the validation-derived communication proxy requirement $\epsilon_{\rm c}=0.3798$. Proposed SemISAC and Semantic Communication Aware ISAC select $\beta$ from $\{0,0.1,\ldots,1\}$ at each power to satisfy this requirement. The remaining methods are unconstrained sensing references. Distortion decreases with power for all methods. Proposed SemISAC remains below Semantic Communication Aware ISAC, with a slightly larger low-power advantage and a 1.72-dB reduction at $P_{\max}=-10$ dB. Its sensing task weighting is designed to direct waveform resources toward uncertainty directions relevant to the arrival-risk decision. Semantic Sensing Aware ISAC attains comparable distortion at the low-power endpoint but exceeds the communication requirement there. Conventional Physical ISAC has higher sensing semantic distortion throughout the power range.

\begin{figure}[tb]
    \centering
    \includegraphics[width=0.85\linewidth,trim=0.00bp 12.73bp 0.00bp 6.07bp,clip]{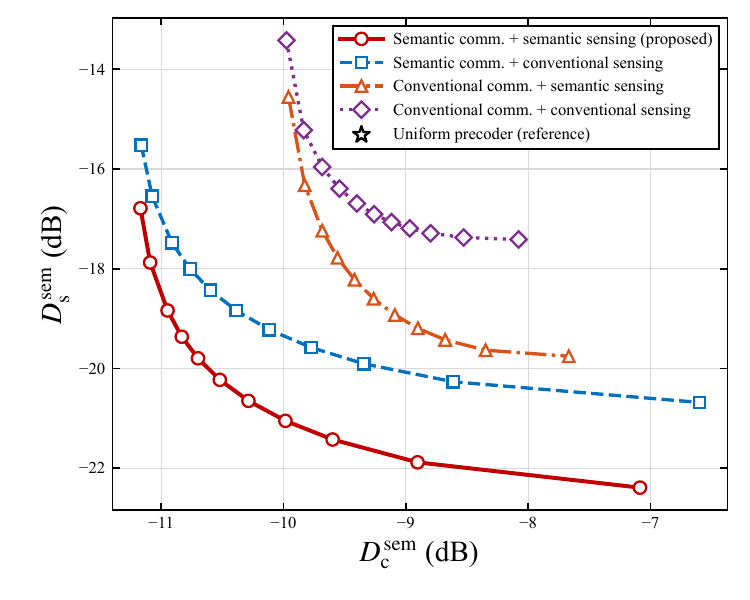}
    \caption{Communication--sensing semantic-distortion trade-off in dB at $P_{\max}=0$ dB as $\beta$ varies from 0 to 1.}
    \label{fig:semantic_distortion_tradeoff}
\end{figure}

Fig.~\ref{fig:semantic_distortion_tradeoff} evaluates the communication--sensing semantic-distortion trade-off in dB as communication priority varies. Increasing this priority generally lowers communication distortion and raises sensing distortion for each learned design, reflecting the intended shift of the shared waveform toward the communication task. Proposed SemISAC traces the lowest curve over the evaluated region. At balanced priority, it achieves lower sensing distortion and slightly lower communication distortion than Semantic Communication Aware ISAC. The single-branch semantic baselines favor the task represented by their semantic prior, whereas Conventional Physical ISAC and Uniform Precoding remain farther from the lower-left region without joint semantic conditioning. Varying $\beta$ therefore selects the operating point, while joint semantic conditioning improves the communication--sensing balance.

\begin{figure}[tb]
    \centering
    \includegraphics[width=0.45\textwidth,trim=0.00bp 12.20bp 0.00bp 8.94bp,clip]{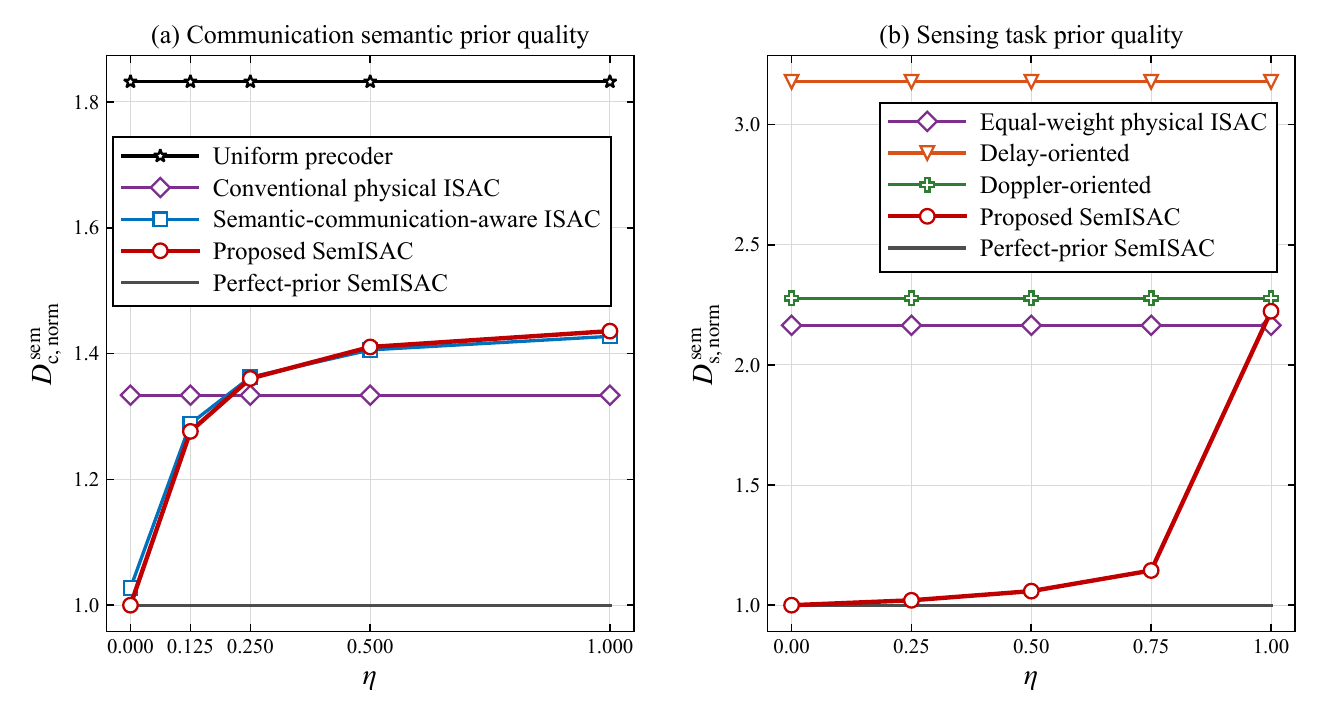}
    \caption{Robustness to semantic-prior mismatch under a balanced task priority.}
    \label{fig:semantic_prior_mismatch}
\end{figure}

For branch $i\in\{{\rm c},{\rm s}\}$, the normalized semantic distortion in Fig.~\ref{fig:semantic_prior_mismatch} is defined as $D_{i,{\rm norm}}^{\rm sem}(\eta)\triangleq D_i^{\rm sem}(\eta)/D_{i,{\rm pp}}^{\rm sem}$, where $D_{i,{\rm pp}}^{\rm sem}$ denotes the corresponding distortion of proposed SemISAC under perfect prior information. In Fig.~\ref{fig:semantic_prior_mismatch}(a), $\eta$ is the communication-prior perturbation level divided by its maximum tested value. In Fig.~\ref{fig:semantic_prior_mismatch}(b), it is the arrival-risk query replacement ratio. We next assess how communication-prior mismatch and sensing-query mismatch affect proposed SemISAC. For communication, Fig.~\ref{fig:semantic_prior_mismatch}(a) shows that the normalized semantic distortion of proposed SemISAC and Semantic Communication Aware ISAC increases as the supplied feature importance departs from the true prior. Both methods remain below Conventional Physical ISAC under small mismatch, but this ordering reverses as mismatch grows. This trend suggests that inaccurate importance values impair the protection of classification-relevant latent components.

The Perfect Prior SemISAC reference uses the unperturbed sensing requirement, and Equal Weight Physical ISAC is the conventional physical benchmark. The Doppler Oriented reference uses the Semantic Sensing Aware ISAC policy with the sensing weight $\operatorname{diag}(0.1,0.9)$. For sensing, Fig.~\ref{fig:semantic_prior_mismatch}(b) shows that proposed SemISAC remains close to Perfect Prior SemISAC under moderate query mismatch. When the arrival-risk query is fully replaced by a delay-estimation query, the normalized distortion increases to 2.22, near the Equal Weight Physical ISAC and Doppler Oriented references. The supplied sensing task weighting then emphasizes delay estimation rather than the uncertainty directions required by the arrival-risk decision.

\begin{figure}[tb]
    \centering
    \includegraphics[width=0.85\linewidth,trim=0.00bp 17.53bp 0.00bp 6.07bp,clip]{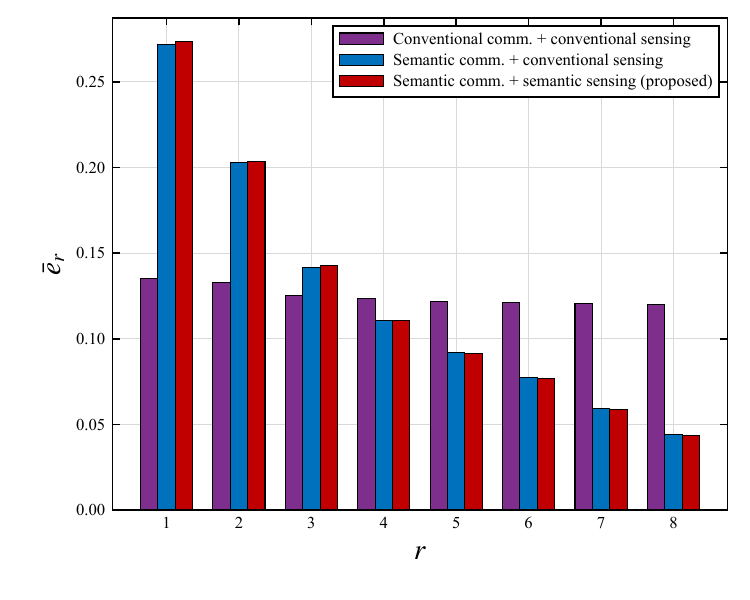}
    \caption{Mean normalized precoder-column energy $\bar e_r$ versus per-sample semantic-importance rank $r$ under a balanced task priority.}
    \label{fig:semantic_feature_protection}
\end{figure}

Finally, Fig.~\ref{fig:semantic_feature_protection} shows the energy allocation across latent features ranked by communication semantic importance under balanced task priority. The rank $r\in\{1,\ldots,Q\}$ orders features from highest to lowest classification relevance. Column energies are normalized to sum to one per sample and then averaged at each rank to obtain $\bar e_r$. Conventional Physical ISAC allocates energy almost uniformly across ranks, whereas Semantic Communication Aware ISAC and proposed SemISAC allocate progressively less energy to less relevant features. The weighted communication loss assigns larger penalties to classification-relevant latent components. The close agreement between the two semantic communication designs suggests that incorporating the sensing prior preserves the communication feature-protection pattern at the balanced operating point.

\section{Conclusion}

In this paper, we developed a general SemISAC framework that advances ISAC from physical-resource sharing toward semantic-level integration for task-relevant information transmission and acquisition. We provided an information-theoretic interpretation of the physical and task-relevant information supported by the shared waveform and formulated a general task-oriented optimization problem with E2E and modular SemISAC optimization as two realization methods. As a concrete realization, we developed a learning-based OFDM TF precoding design using task-aware semantic proxies for representative SemCom and SemS tasks. Simulations showed that these proxies track downstream task errors more closely than physical proxies. The design reduces sensing semantic distortion under a given communication requirement and improves the communication--sensing trade-off.


\appendices
\section{\revised{Derivation of the Task-Loss Interfaces for the OFDM Realization}}
\label{app:ofdm_proxy_construction}

This appendix derives the task-loss interfaces used in \eqref{eq:case_communication_semantic_constraint} and \eqref{eq:case_sensing_semantic_proxy}.

\subsection{\revised{Communication Task-Loss Interface from Classifier Sensitivity}}

The communication interface characterizes the sensitivity of the classification output to the recovery error of each transmitted latent component. For $\mathbf{z}_{{\rm c},t}\in\mathbb{C}^{Q}$, define its real representation as
\begin{align}
    \mathbf{r}_t
    =
    \left[
        \operatorname{Re}\{z_{{\rm c},1,t}\},
        \operatorname{Im}\{z_{{\rm c},1,t}\},
        \ldots,
        \operatorname{Re}\{z_{{\rm c},Q,t}\},
        \operatorname{Im}\{z_{{\rm c},Q,t}\}
    \right]^{T}
    \label{eq:case_classifier_real_representation}
\end{align}
The classifier produces logits $\boldsymbol{\ell}_t=f_{\rm c}(\mathbf{r}_t)$ and output $\mathbf{p}_t=\operatorname{softmax}(\boldsymbol{\ell}_t)$.

Let $\mathbf{J}_t$ denote the Jacobian of the classifier logits, and define $\mathbf{F}_t$ through
\begin{align}
    \mathbf{J}_t
    =
    \frac{\partial\boldsymbol{\ell}_t}
    {\partial\mathbf{r}_t^{T}},
    \qquad
    \mathbf{F}_t
    =
    \operatorname{diag}\left(\mathbf{p}_t\right)
    -
    \mathbf{p}_t\mathbf{p}_t^{T}.
    \label{eq:case_classifier_fisher}
\end{align}
The corresponding local sensitivity matrix is $\mathbf{M}_t=\mathbf{J}_t^{T}\mathbf{F}_t\mathbf{J}_t$.
Let $\widehat{\mathbf{r}}_t$ denote the real representation of the recovered latent $\widehat{\mathbf{z}}_{{\rm c},t}$ using the ordering in \eqref{eq:case_classifier_real_representation}. For the perturbation $\mathbf{e}_t=\widehat{\mathbf{r}}_t-\mathbf{r}_t$, the classifier-output KL divergence admits the local second-order approximation \cite{kullback1951informationSufficiency}
\begin{align}
    D_{\rm KL}
    \left(
        \mathbf{p}_t
        \,\Vert\,
        \widehat{\mathbf{p}}_t
    \right)
    \approx
    \frac{1}{2}
    \mathbf{e}_t^{T}
    \mathbf{M}_t
    \mathbf{e}_t.
    \label{eq:case_communication_local_kl}
\end{align}

Using the diagonal curvature terms, the relevance of the $q$th complex latent component is obtained by combining its real and imaginary parts as $\widetilde{\rho}_{q,t}=\left[\mathbf{M}_t\right]_{2q-1,2q-1}+\left[\mathbf{M}_t\right]_{2q,2q}$. The normalized relevance is
\begin{align}
    \rho_{q,t}
    =
    \frac{
        \widetilde{\rho}_{q,t}
        +
        \varepsilon_{\rho}
    }{
        \sum_{k=1}^{Q}
        \left(
            \widetilde{\rho}_{k,t}
            +
            \varepsilon_{\rho}
        \right)
    },
    \qquad
    \boldsymbol{\rho}_t
    =
    \left[
        \rho_{1,t},
        \ldots,
        \rho_{Q,t}
    \right]^{T}.
    \label{eq:case_semantic_feature_relevance}
\end{align}
Thus, $\boldsymbol{\rho}_t$ provides the componentwise task relevance used by the communication \revised{task-loss interface}.

For the communication model in \eqref{eq:case_communication_observation}, define the true and estimated effective channels as
\begin{align}
    \mathbf{A}_t
    =
    \operatorname{diag}
    \left(
        \mathbf{h}_t
    \right)
    \mathbf{P}_t,
    \qquad
    \widehat{\mathbf{A}}_t
    =
    \operatorname{diag}
    \left(
        \widehat{\mathbf{h}}_t
    \right)
    \mathbf{P}_t.
    \label{eq:case_semantic_symbol_mapping}
\end{align}
The LMMSE receiver based on the estimated channel is
\begin{align}
    \mathbf{F}_{{\rm c},t}^{\rm LMMSE}
    =
    \left(
        \widehat{\mathbf{A}}_t^{H}
        \widehat{\mathbf{A}}_t
        +
        \sigma_0^2\mathbf{I}_{Q}
    \right)^{-1}
    \widehat{\mathbf{A}}_t^{H}
    \label{eq:case_communication_lmmse_filter}
\end{align}
with $\widehat{\mathbf{z}}_{{\rm c},t}=\mathbf{F}_{{\rm c},t}^{\rm LMMSE} \mathbf{y}_{{\rm c},t}$. The resulting latent-recovery error covariance, evaluated under the true channel, is
\begin{align}
    \mathbf{C}_{{\rm c},t}
    =
    \mathbb{E}
    \left[
        \left(
            \widehat{\mathbf{z}}_{{\rm c},t}
            -
            \mathbf{z}_{{\rm c},t}
        \right)
        \left(
            \widehat{\mathbf{z}}_{{\rm c},t}
            -
            \mathbf{z}_{{\rm c},t}
        \right)^{H}
    \right]
    \nonumber\\
    =
    \left(
        \mathbf{F}_{{\rm c},t}^{\rm LMMSE}
        \mathbf{A}_t
        -
        \mathbf{I}_{Q}
    \right)
    \left(
        \mathbf{F}_{{\rm c},t}^{\rm LMMSE}
        \mathbf{A}_t
        -
        \mathbf{I}_{Q}
    \right)^{H}
    \nonumber\\
    \quad
    +
    \sigma_0^2
    \mathbf{F}_{{\rm c},t}^{\rm LMMSE}
    \left(
        \mathbf{F}_{{\rm c},t}^{\rm LMMSE}
    \right)^{H}.
    \label{eq:case_communication_error_covariance}
\end{align}

Combining $\rho_{q,t}$ with the variance-normalized component errors $\overline{c}_{q,t}$ defined after \eqref{eq:case_communication_semantic_constraint} gives the task-aware communication proxy. The corresponding task-independent physical proxy is $D_{{\rm c},t}^{\rm phy}=Q^{-1}\operatorname{tr}(\mathbf{C}_{{\rm c},t})$.

\subsection{\revised{Sensing Task-Loss Interface from Arrival-Risk Sensitivity}}

The normalized delay--Doppler state and arrival-time mapping are defined in Section~\ref{sec:ofdm_case_study} and \eqref{eq:case_sensing_arrival_time_mapping}.

A differentiable risk score is used to characterize the local task sensitivity as
\begin{align}
    u_{\rm s}
    \left(
        \boldsymbol{\theta}
    \right)
    =
    \operatorname{sigmoid}
    \left(
        \frac{
            T_{\rm safe}
            -
            t_{\rm arr}
            \left(
                \boldsymbol{\theta}
            \right)
        }{
            \sigma_T
        }
    \right).
    \label{eq:case_sensing_task_information}
\end{align}

Since the instantaneous state is unavailable before probing, \revised{the requirement function $E_{\rm s}$ evaluates} the local task sensitivity at the transmitter-available prior $\overline{\boldsymbol{\theta}}_t$. The task gradient and normalized sensing task-sensitivity matrix are
\begin{align}
    \mathbf{G}_{{\rm s},t}
    =
    \left.
    \frac{
        \partial
        u_{\rm s}
        \left(
            \boldsymbol{\theta}
        \right)
    }{
        \partial
        \boldsymbol{\theta}^{T}
    }
    \right|_{
        \boldsymbol{\theta}
        =
        \overline{\boldsymbol{\theta}}_t
    },
    \label{eq:case_sensing_task_gradient}\\
    \mathbf{W}_{{\rm s},t}
    =
    \frac{
        \mathbf{G}_{{\rm s},t}^{T}
        \mathbf{G}_{{\rm s},t}
    }{
        \mathbb{E}_{\rm tr}
        \left[
            \operatorname{tr}
            \left(
                \mathbf{G}_{\rm s}^{T}
                \mathbf{G}_{\rm s}
            \right)
        \right]
        +
        \varepsilon_{\rm s}
    }.
    \label{eq:case_sensing_task_weighting_matrix}
\end{align}
Here, $\varepsilon_{\rm s}$ is a numerical-stability constant. The resulting matrix defines the sensing requirement used in Section~\ref{sec:ofdm_case_study}.

For final task evaluation, \revised{sensing signal processing estimates} delay and Doppler from \eqref{eq:case_ofdm_sensing_observation} using the following matching metric
\begin{align}
    \Lambda_t(\tau,\nu)
    =
    \Bigg|
    \sum_{n=0}^{N-1}
    \sum_{m=0}^{M-1}
    &Y_{{\rm s},t}[n,m]
    X_t^{*}[n,m]
    \nonumber\\
    &\times
    e^{\mathrm{j}2\pi n\Delta f\tau}
    e^{-\mathrm{j}2\pi mT_{\rm sym}\nu}
    \Bigg|.
    \label{eq:case_sensing_matching_metric}
\end{align}
The delay--Doppler estimate is then obtained over the bounded search grid $\mathcal{G}_{\tau\nu}$ as
\begin{align}
    \left(
        \widehat{\tau}_t^{\rm phy},
        \widehat{\nu}_t^{\rm phy}
    \right)
    =
    \underset{
        (\tau,\nu)\in\mathcal{G}_{\tau\nu}
    }{\arg\max}
    \;
    \Lambda_t(\tau,\nu).
    \label{eq:case_sensing_grid_estimator}
\end{align}
The estimates are converted to $\widehat{\boldsymbol{\theta}}_t$ and subsequently mapped to $\widehat{U}_{{\rm s},t}$ through the arrival-time decision in \eqref{eq:case_sensing_task_output}.

For waveform optimization, $\boldsymbol{\xi}_t=[\boldsymbol{\theta}_t^{T},\alpha_{\operatorname{Re},t},\alpha_{\operatorname{Im},t}]^{T}$ represents the normalized delay--Doppler state and the unknown complex reflection coefficient. Let $\boldsymbol{\mu}_{{\rm s},t}(\boldsymbol{\xi}_t)$ denote the vectorized noise-free observation implied by \eqref{eq:case_ofdm_sensing_channel} and \eqref{eq:case_ofdm_sensing_observation}. For white complex Gaussian noise, the FIM entries are \cite{stoica1990musicMLCRB}
\begin{align}
    \left[
        \mathbf{J}_{{\rm s},t}
    \right]_{a,b}
    =
    \frac{2}{\sigma_0^2}
    \operatorname{Re}
    \left\{
        \left(
            \frac{
                \partial
                \boldsymbol{\mu}_{{\rm s},t}
            }{
                \partial\xi_{a,t}
            }
        \right)^{H}
        \frac{
            \partial
            \boldsymbol{\mu}_{{\rm s},t}
        }{
            \partial\xi_{b,t}
        }
    \right\}.
    \label{eq:case_sensing_fisher_information}
\end{align}

Partitioning the FIM with respect to the delay--Doppler state and the nuisance reflection coefficient gives
\begin{align}
    \mathbf{J}_{{\rm s},t}
    =
    \begin{bmatrix}
        \mathbf{J}_{\theta\theta,t}
        &
        \mathbf{J}_{\theta\alpha,t}
        \\
        \mathbf{J}_{\alpha\theta,t}
        &
        \mathbf{J}_{\alpha\alpha,t}
    \end{bmatrix}.
    \label{eq:case_sensing_fim_partition}
\end{align}
The equivalent FIM for the normalized delay--Doppler state and the corresponding CRB are
\begin{align}
    \mathbf{J}_{\theta\mid\alpha,t}
    =
    \mathbf{J}_{\theta\theta,t}
    -
    \mathbf{J}_{\theta\alpha,t}
    \mathbf{J}_{\alpha\alpha,t}^{-1}
    \mathbf{J}_{\alpha\theta,t},
    \label{eq:case_sensing_equivalent_fim}\\
    \mathbf{C}_{{\rm s},t}
    =
    \mathbf{J}_{\theta\mid\alpha,t}^{-1}
    =
    \left[
        \mathbf{J}_{{\rm s},t}^{-1}
    \right]_{1:2,1:2}.
    \label{eq:case_sensing_crb}
\end{align}
The task-independent sensing proxy is $D_{{\rm s},t}^{\rm phy}=\frac{1}{2}\operatorname{tr}(\mathbf{C}_{{\rm s},t})$.

The task-aware sensing interface follows from a local first-order approximation of the smooth arrival-risk score. With $\mathbf{e}_{\theta,t}=\widehat{\boldsymbol{\theta}}_t-\boldsymbol{\theta}_t$, the risk-score perturbation and its local mean-square approximation are
\begin{align}
    \Delta u_{{\rm s},t} =
    u_{\rm s}
    \left(
        \widehat{\boldsymbol{\theta}}_t
    \right)
    -
    u_{\rm s}
    \left(
        \boldsymbol{\theta}_t
    \right)
    \approx
    \mathbf{G}_{{\rm s},t}
    \mathbf{e}_{\theta,t},
    \label{eq:case_sensing_delta_method}\\
    \mathbb{E}
    \left[
        \left|
            \Delta u_{{\rm s},t}
        \right|^{2}
    \right]
    \approx
    \operatorname{tr}
    \left(
        \mathbf{G}_{{\rm s},t}^{T}
        \mathbf{G}_{{\rm s},t}
        \operatorname{Cov}
        \left(
            \mathbf{e}_{\theta,t}
        \right)
    \right).
    \label{eq:case_sensing_delta_mse}
\end{align}
Using $\mathbf{C}_{{\rm s},t}$ as the waveform-dependent covariance proxy with the normalization in \eqref{eq:case_sensing_task_weighting_matrix} yields $D_{{\rm s},t}^{\rm sem}=\operatorname{tr}(\mathbf{W}_{{\rm s},t}\mathbf{C}_{{\rm s},t})$, which recovers \eqref{eq:case_sensing_semantic_proxy} and characterizes the waveform-dependent uncertainty of the local arrival-risk score.

\bibliographystyle{IEEEtran}
\bibliography{ref}

\end{document}